\newtheorem{Definition}{Definition}
\newtheorem{theorem}{Theorem}
\newtheorem{Remark}{Remark}
\newtheorem{pro}{Proposition}
\newtheorem{lem}{Lemma}
\newcommand{\R}{\mathbb{R}}
\newcommand{\ol}{\overline}
\renewcommand{\cal}{\mathcal}
\renewcommand{\(}{\left (}
\renewcommand{\)}{\right )}
\renewcommand{\;}{\,;\,}
\newcommand{\N}{\mathbb{N}}
\title{\LARGE \bf
Swarm Aggregation under Fading Attractions}
\author{Xudong Chen$^*$
\thanks{$^*$X. Chen is with the Coordinated Science Laboratory, University of Illinois at Urbana-Champaign. email: xdchen@illinois.edu.}%
}
\begin{document}

\maketitle
\thispagestyle{empty}
\pagestyle{empty}

\begin{abstract}
Gradient descent methods have been widely used for organizing multi-agent systems, in which they can provide decentralized control laws with provable convergence. 
Often, the control laws are designed so that two neighboring agents repel/attract each other at a short/long distance of separation. 
When the interactions between neighboring agents are moreover nonfading, the potential function from which they are derived is radially unbounded. Hence, the LaSalle's principle is sufficient to establish the system convergence. 
This paper investigates, in contrast, a more realistic scenario where interactions between neighboring agents have fading attractions. In such setting, the LaSalle type arguments may not be sufficient. To tackle the problem,  
we introduce a class of partitions, termed \emph {dilute partitions}, of formations which cluster agents according to the inter- and intra-cluster interaction strengths.  We then apply dilute partitions to trajectories of formations  generated by the multi-agent system, and show that each of the trajectories remains bounded along the evolution,  and converges to the set of equilibria.    
\end{abstract}

\section{Introduction}
The use of gradient descent for organizing a group of mobile autonomous agents has been widely appreciated in mathematics and in its real-world applications. Descent equations often provide the most direct demonstration of the existence of local minima, and provide easily implemented algorithm for finding the minima. Furthermore,  in the context of multi-agent control, gradient descent can be interpreted as providing decentralized  control laws for pairs of neighboring agents in the system. Specifically, we consider a class of multi-agent systems in which pairs of neighboring agents  attract/repel each other in a reciprocal way, depending {\it only} on the distances of separation. Then, the resulting dynamics of the agents evolve as a gradient flow over a Euclidean space. We describe below the model in precise terms:   
\vspace{3pt}
\\
{\bf Model}. 
Let $G = (V,E)$ be an undirected connected graph of~$N$ vertices, with $V = \{v_1,\ldots, v_N\}$ the vertex set, and $E$ the edge set. We denote by $(v_i,v_j)$ an edge of $G$. 
Let $V_i$ be the set of neighbors of~$v_i$, i.e., $$V_i = \{v_j\in V\mid (v_i,v_j) \in E\}.$$   To each vertex~$v_i$, we assign an agent~$i$, with $ x_i \in \R^n$ its coordinate. With a slight abuse of notation, we refer to agent~$i$ as~$x_i$. For every edge $(v_i,v_j) \in E$, we let $d_{ij}$ be the distance between $x_i$ and $x_j$, i.e., $d_{ij}:= \| x_i- x_j\|$.  
The equations of motion of the $N$ agents $ x_1,\cdots, x_N$ in $\mathbb{R}^n$ are given by
\begin{equation}\label{MODEL}
\frac{d}{dt}{ x}_i=\sum_{v_j\in V_i}g_{ij}(d_{ij})(x_j-x_i),\hspace{10pt} \forall\, v_i \in V.
\end{equation}
Each scalar function $g_{ij}$ is assumed to be continuously differentiable; we refer to~$g_{ij}$, for $(v_i,v_j)\in E$, the {\bf interaction functions} associated with system~\eqref{MODEL}.  
An important property associated with system~\eqref{MODEL} is that the dynamics of the agents evolve as a gradient flow. A direct computation yields that the associated potential function is given by 
\begin{equation}\label{POTENTIAL}
\Psi(x_1,\ldots,x_N):=\sum_{(v_i,v_j)\in E}\int_1^{\|x_j - x_i\|} s g_{ij}(s)ds.
\end{equation}

Designing of the interaction functions that are necessary for organizing such multi-agent system has been widely investigated: questions about swarm aggregation and avoidance of collisions~\cite{GP,chu2003self,XC2014ACC}, questions about local/global stabilization of targeted configurations~\cite{chu2003self,krick2009,dimarogonas2008stability,xudongchen2015CDCtriangulatedformationcontrol,zhiyongsun2015ECC}, questions about robustness issues of control laws under perturbations~\cite{AB2012CDC,sun2014CDC,USZB,mou2014CDC}, questions about counting number of critical formations~\cite{BDO2014CT,UH2013E}  have all been treated to some degree. 
We also refer to~\cite{xudongchen2015CDCformationcontroltimevaryinggraph,XC2014CDC,JB2006cdc,cao2008control,baillieul2007combinatorial,AB2013TAC,AL2014ECC,lin2014distributed,chen2015decentralized} for other types of models for multi-agent control, as variants of system~\eqref{MODEL}. 
  
For the purpose of achieving swarm aggregation, the interaction functions $g_{ij}$'s are often designed so that neighboring agents 
attract each other at a long distance. In particular, we note here that if the underlying graph $G$ is connected, and the interaction functions between neighboring agents have  non-fading attractions (as considered in most of the literatures: see, for example,~\cite{GP,chu2003self,zhiyongsun2015ECC,krick2009,dimarogonas2008stability}); then, for any initial condition, the resulting gradient flow will converge to the set of equilibria. In other words, there is no escape of agents to infinity along the evolution of the multi-agent system. Indeed, in any of such case, the  associated potential function~\eqref{POTENTIAL} is {\it radially unbounded}, i.e., it approaches to infinity as the size of a formation tends to infinity. So then, each trajectory  of system~\eqref{MODEL} has to remain bounded, and hence converges to the set of equilibria.  

On the other hand, it is more realistic  to assume that the magnitude of an attraction between two neighboring agents fades away as their mutual distance grows. 
We refer to~\cite{cucker2007emergent}, as an example, for modeling the flocking behavior with fading interactions. Specifically, the authors there considered a second order model: 
$$
\left\{
\begin{array}{l}
\dot x_i = v_i\\
\dot v_i = \sum^N_{j = 1} g(d_{ij}) (v_j - v_i),  
\end{array}
\right.
$$
with the graph $G$ being complete and without repulsions, i.e., the function $g(d)$ is positive at all distances $d>0$.  
Also, we recall that the Lennard-Jones force, which describes the interaction between a pair of neutral molecules/atoms, has strong repulsion and fading attraction. 
 
 We note here that,  under the assumption of fading attraction, the potential function associated with system~\eqref{MODEL} may remain bounded as the size of a formation grows; indeed, one may find a continuous path of formations along which the potential function decreases while the size of formation approaches to infinity. In particular,  conventional techniques for proving convergence of gradient flows, such as using the potential function as a Lyapunov function and then appealing to the LaSalle's principle~\cite{lasalle1960some}, may not work in this case. 
Nevertheless, we are still able to show that all the trajectories generated by system~\eqref{MODEL} converge to the set of equilibria. The proof of the system convergence  relies on the use of a class of partitions, termed {\it dilute partitions}, of formations introduced in section~III. Roughly speaking, dilute partitions decompose formations into different clusters of agents according to certain combinatorial and metric conditions. 
We apply dilute partitions to trajectories of formations generated by system~\eqref{MODEL}, and investigate how clusters of agents evolve over time and interact with each other. In particular, we show that each trajectory generated by system~\eqref{MODEL} has to remain bounded, and hence converges to the set of equilibria.  
This approach, via the use of dilute partition, to multi-agent systems might be of independent interest for studying other problems that involve large sized formations.


This paper expands on some preliminary result presented in~\cite{XC2014ACC} by, among others, providing an analysis of system~\eqref{MODEL} with an arbitrary connected graph (whereas in~\cite{XC2014ACC}, we dealt only with the complete graph), a finer description of the dilute partitions and the associated properties, and a considerable amount of analyses and proofs that were left out. 
The remainder of the paper is organized as follows. In section~II, we introduce definitions and notations and describe some preliminary results about system~\eqref{MODEL}. We also state the main theorem of the paper. In particular, the main theorem states that  the equilibria of system~\eqref{MODEL} have bounded size, and moreover, all trajectories generated by system~\eqref{MODEL} converge to the set of equilibria under the assumption of fading attractions. Sections~III and~IV are devoted to establishing properties of system~\eqref{MODEL} that are needed for proving the main theorem. A detailed organization of these two sections will be given after the statement of the theorem. We provide conclusions in the last section. The paper ends with Appendices containing proofs of some technical results.

\section{Backgrounds and Main Theorem}
In this section, we introduce the main definitions used in this work, describe some preliminary results, and state the main theorem of the paper. 

\subsection{Backgrounds and notations}
Let $G = (V,E)$ be an undirected graph of $N$ vertices. Let $V'$ be a subset of $V$; a subgraph $G' = (V', E')$ of $G$ is said to be {\it induced} by $V'$ if the following condition is satisfied: an edge $(v_i,v_j)$ is in $E'$ if and only if $(v_i,v_j)$ is in $E$.  

Given a formation of $N$ agents in $\R^n$, with states $x_1,\ldots, x_N$,  respectively, we set $p := (x_1,\ldots,x_N) \in \R^{nN}$. We call $p$ a {\bf configuration};  
a configuration $p \in P_G$ can be viewed as an {\it embedding} of the graph $G$ in $\R^n$ by assigning vertex $v_i$ to $x_i$. We call the pair $(G, p)$ a {\bf framework}. 
We define the  {\it configuration space}  $P_{G}$, associated with the graph~$G$, as follows:
\begin{equation*}
P_{G}:=\left\{( x_1,\cdots, x_N)\in \mathbb{R}^{nN} \mid  x_i\neq  x_j, \, \forall \, (v_i,v_j)\in E \right\}. 
\end{equation*} 
Equivalently, $P_{G}$ is the set of embeddings of the graph $G$ in $\mathbb{R}^n$ whose neighboring vertices have distinct positions. 
Let $(G, p)$ be a framework, with $p = (x_1,\ldots, x_N) \in P_G$.  
 Let $G' = \(V',E' \)$ be a subgraph of $G$, with $V' = \{v_{i_1},\ldots,v_{i_k}\}$. We call $p'\in P_{G'}$ a {\bf sub-configuration} of $p$ associated with $G'$ if $p' = (x_{i_1},\ldots, x_{i_k})$, and correspondingly $(G',p')$ a sub-framework of $(G,p)$.   
\vspace{3pt}
\\ 
{\bf Attraction/Repulsion functions}. We  now introduce the class of interaction functions, termed {\it attraction/repulsion functions},  that are considered in the paper. Roughly speaking, an attraction/repulsion function between a pair of agents is such that the two agents attract/repel each other at a long/short distance. Furthermore, we require that the repulsion go to infinity as the mutual distance between the agents approaches to zero and that the attraction fade away as the distance grows.  
A typical example of such function is the Lennard-Jones type interaction:
\begin{equation}\label{eq:typicalexample}
g(d) = -\frac{\sigma_1}{d^{n_1}} + \frac{\sigma_2}{d^{n_2}} 
\end{equation}
with $\sigma_1$, $\sigma_2$ positive real numbers, and $n_1$, $n_2$ positive integers satisfying $n_1 > n_2 > 1$. 
We now define attraction/repulsion functions in precise terms. Let $\R_+$ be the set of strictly positive real numbers. We denote by $\operatorname{C}(\R_+,\R)$ the set of continuous functions from $\R_+$ to $\R$. We have the following definition:

\begin{Definition}[Attraction/Repulsion functions]\label{def:fadingattraction}
A function $g$ in $\operatorname{C}(\R_+,\R)$ is an {\bf attraction/repulsion function} if $g$ satisfies the following conditions: 
\begin{enumerate}
\item {\bf Strong repulsion}: $$\lim_{d\to 0+} dg(d)=-\infty,$$ 
and moreover,  $$\displaystyle \lim_{d\to 0+}\int^1_d sg(s)ds=-\infty.$$ 
\item {\bf Fading attraction}: There exists a number $\alpha_+ > 0$ such that 
\begin{equation*}
g(d) > 0, \hspace{10pt} \forall \, d \ge \alpha_+,    
\end{equation*} 
and moreover,  $$\lim_{d\to\infty} d g(d)=0.$$   
\end{enumerate}\,
\end{Definition}

\noindent Note that the function $dg_{ij}(d)$ shows up in Definition~\ref{def:fadingattraction} because $| dg_{ij}(d) |$ represents the actual magnitude of attraction/repulsion between $x_i$ and $x_j$. 

 We assume in the remainder of the paper that all the interaction functions $g_{ij}$, for all $(v_i,v_j) \in E$, are attraction/repulsion functions. Furthermore, we assume, without loss of generality, that the positive number $\alpha_+$ in Definition~\ref{def:fadingattraction} can be applied to all $g_{ij}$, i.e., 
 \begin{equation}\label{eq:defx+}
g_{ij}(d) > 0, \hspace{10pt} \forall \, d \ge \alpha_+,   
 \end{equation}
 for all $(v_i,v_j) \in E$. 
 
In the paper, we often deal with sub-systems of~\eqref{MODEL}, especially, the subsystems induced by subgraphs of $G$. We thus have the following definition: 
 

\begin{Definition}[Induced sub-systems]
Let $G' = (V',E')$ be a subgraph of $G$. A multi-agent system is a  {\bf sub-system induced by $G'$}  if it is comprised of  agents $ x_i$, for $v_i\in V'$,  together with the interaction functions   $g_{ij}$,  for $(v_i,v_j)\in E'$. Specifically,  the dynamics of the agents in the induced sub-system are given by:
\begin{equation*}\label{eq:inducedsub-system}
\dot{ x}_{i} = \sum_{v_j\in V'_i}g_{ij}(d_{ij}) ( x_j- x_i),\hspace{10pt} \forall v_i\in V'
\end{equation*}
with $V'_i$ the neighbors of $i$ in $G'$. 
\end{Definition}

For each configuration $p\in P_G$, we denote by $f(p)$ the vector field of system~\eqref{MODEL}. The configuration $p$ is said to be an {\bf equilibrium} of system~\eqref{MODEL} if $f(p) = 0$.  For each $v_i\in V$, we let $f_i(p)\in \R^n$ be defined by restricting $f(p)$ to agent $x_i$, i.e., 
$$
f_i(p):= \sum_{v_j \in V_i} g_{ij}(d_{ij}) ( x_j- x_i).
$$
Similarly, for a subgraph $G' = (V', E')$ of $G$, we denote by $f_{V'}(p)\in \R^{n|V'|}$ the restricting of $f(p)$ to the sub-configuration $p'$ associated with $G'$.  
\subsection{Preliminaries and the main result}

In this subsection, we describe some preliminary results, and then state the main theorem of the paper.  
Recall that the dynamics of system~\eqref{MODEL} is a gradient flow of $\Psi$ defined in~\eqref{POTENTIAL}.    
By assuming that all $g_{ij}$, for $(v_i,v_j) \in E$, are attraction/repulsion functions,  we have the following fact: 

\begin{lem}\label{lem:phiboundedbelow}
The potential function $\Psi: P_G \to \mathbb{R}$ is bounded below, i.e., 
$$
\inf\{\Psi(p) \mid p\in P_G\} > -\infty.
$$ \, 
\end{lem}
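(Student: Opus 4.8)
The plan is to reduce the statement to a claim about each individual summand of $\Psi$. For every edge $(v_i,v_j)\in E$, define $h_{ij}\colon \R_+\to\R$ by $h_{ij}(d):=\int_1^{d} s\, g_{ij}(s)\,ds$, so that, by~\eqref{POTENTIAL}, $\Psi(p)=\sum_{(v_i,v_j)\in E} h_{ij}(d_{ij})$ for every $p\in P_G$. Since $E$ is finite, it suffices to show that each $h_{ij}$ is bounded below on $\R_+$: once this is done, setting $m_{ij}:=\inf_{d>0}h_{ij}(d)>-\infty$ gives $\Psi(p)\ge\sum_{(v_i,v_j)\in E} m_{ij}>-\infty$ uniformly over $p\in P_G$, which is the assertion.

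To bound a single $h_{ij}$ below, I would split $\R_+$ into three regimes. First, on $[\alpha_+,\infty)$: by the fading-attraction condition of Definition~\ref{def:fadingattraction}, arranged uniformly as in~\eqref{eq:defx+}, the integrand $s\,g_{ij}(s)$ is strictly positive for $s\ge\alpha_+$, so $h_{ij}$ is non-decreasing there and $h_{ij}(d)\ge h_{ij}(\alpha_+)$ for all $d\ge\alpha_+$. Second, near $0$: the strong-repulsion condition says $\lim_{d\to 0+}\int_d^1 s\,g_{ij}(s)\,ds=-\infty$, i.e. $h_{ij}(d)=-\int_d^1 s\,g_{ij}(s)\,ds\to+\infty$ as $d\to 0+$; hence there is $\delta\in(0,\alpha_+]$ with $h_{ij}(d)\ge h_{ij}(\alpha_+)$ for all $d\in(0,\delta]$. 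Third, on the compact interval $[\delta,\alpha_+]$: $h_{ij}$ is continuous, because $g_{ij}\in\operatorname{C}(\R_+,\R)$ makes $s\,g_{ij}(s)$ continuous on $\R_+$, so $h_{ij}$ attains a finite minimum $\mu_{ij}$ on $[\delta,\alpha_+]$. Combining the three bounds yields $h_{ij}(d)\ge\min\{\mu_{ij},\,h_{ij}(\alpha_+)\}$ for every $d\in\R_+$, as needed.

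I do not expect a genuine obstacle here; the only point requiring a little care is that the near-zero estimate has to be phrased in terms of a fixed level (here $h_{ij}(\alpha_+)$) so that it can be matched with the compact-set minimum, and this is arranged precisely by the divergence $h_{ij}(d)\to+\infty$. It is worth emphasizing that only the \emph{sign} of $g_{ij}$ at large distances is used in this argument, not the decay rate $\lim_{d\to\infty}d\,g_{ij}(d)=0$; the latter property is what will later be needed to preclude escape of agents to infinity, and plays no role in the boundedness-below of $\Psi$.
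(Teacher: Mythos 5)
Your proof is correct and follows essentially the same route as the paper: both reduce $\Psi$ to a finite sum of per-edge terms $\int_1^{d} s\,g_{ij}(s)\,ds$ and use the sign/growth structure of attraction--repulsion functions to show the infimum of each term is attained on a compact interval. The only cosmetic difference is near zero, where the paper uses the sign of $g_{ij}$ on $(0,\alpha_-]$ (monotonicity of the integral) while you invoke the divergence $\int_d^1 s\,g_{ij}(s)\,ds\to-\infty$; both are valid consequences of the strong-repulsion condition.
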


\begin{proof}
First, note that from the condition of {\it strong repulsion}, there is a positive number $\alpha_-> 0$ such that 
$$
g_{ij}(d) < 0, \hspace{10pt} \forall \, d \le \alpha_- \mbox{ and } \forall\, (v_i,v_j)\in E.  
$$
We also recall that $\alpha_+$ is defined in~\eqref{eq:defx+} such that 
$$
g_{ij}(d) > 0, \hspace{10pt} \forall \, d \ge \alpha_+ \mbox{ and } \forall\, (v_i,v_j)\in E.  
$$
This, in particular, implies that for all $(v_i,v_j) \in E$, 
$$
\min_{d\in [\alpha_-, \alpha_+]} \int^d_{1} sg_{ij}(s) ds = \inf_{d \in \R_+} \int^d_{1} sg_{ij}(s)ds. 
$$
Now, let
\begin{equation}\label{eq:defpsi0}
\psi_0:= \min_{(v_i,v_j) \in E}\,\left\{ \min_{d\in [\alpha_-, \alpha_+]} \int^d_{1} s g_{ij}(s) ds \right\}; 
\end{equation}
then, we have
$$
\Psi(p) \ge |E|\, \psi_0, \hspace{10pt} \forall\, p\in P_G,
$$
which completes the proof.
\end{proof}

It is well known that along a trajectory of a gradient flow, the potential function is non-increasing. On the other hand, the condition of {\it strong repulsion} implies that the potential function $\Psi$ is infinite if the distance of separation of two neighboring agents is zero. This, in particular, implies that there is no collision of neighboring agents along the evolution, and hence solutions of system~\eqref{MODEL} exist for all time. 
Furthermore, for a configuration $p = (x_1,\ldots, x_N)\in P_G$, if we let $d_-(p)$ and $d_+(p)$ be defined as follows: 
\begin{equation}\label{eq:defd-d+}
\left \{
\begin{array}{l}
d_-(p) := \min \left \{  \|x_j - x_i\| \mid  (v_i, v_j)\in E   \right \} \vspace{3pt} \\
d_+(p) := \max \left \{  \|x_j - x_i\| \mid  (v_i, v_j)\in E   \right \}, 
\end{array}
\right.
\end{equation} 
then we have the following fact:

\begin{lem}\label{ELB} 
Let $p(0)\in P_G$ be the initial condition of system~\eqref{MODEL}, and $p(t)$ be the trajectory generated by the system. Then, 
$$
\inf \left\{  d_-(p(t))  \mid t\ge 0 \right \} > 0. 
$$\,
\end{lem}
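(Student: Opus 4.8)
The plan is to derive a uniform lower bound on pairwise distances of neighboring agents from the combination of two facts: (i) the potential $\Psi$ is non-increasing along the trajectory, so $\Psi(p(t)) \le \Psi(p(0))$ for all $t \ge 0$; and (ii) by the \emph{strong repulsion} condition, the single-edge integral $\int_1^{d}s g_{ij}(s)\,ds \to +\infty$ as $d \to 0+$ (note the sign: $\int^1_d s g_{ij}(s)\,ds \to -\infty$, hence $\int_1^d = -\int_d^1 \to +\infty$). So if some edge distance $d_{ij}(t)$ were to become very small, the corresponding term in $\Psi$ would become very large, and since every other term is bounded below by $\psi_0$ (as established in the proof of Lemma~\ref{lem:phiboundedbelow}), the total $\Psi(p(t))$ would exceed $\Psi(p(0))$, a contradiction.

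Concretely, I would argue as follows. Fix the initial condition $p(0)$ and set $M := \Psi(p(0))$. Since $p(t)$ is a gradient-descent trajectory of $\Psi$, we have $\Psi(p(t)) \le M$ for all $t \ge 0$. Writing $\Psi(p(t)) = \sum_{(v_i,v_j)\in E} \int_1^{d_{ij}(t)} s g_{ij}(s)\,ds$ and using the bound $\int_1^{d}s g_{ij}(s)\,ds \ge \psi_0$ valid for every edge and every $d \in \R_+$ (from the proof of Lemma~\ref{lem:phiboundedbelow}), we get, for any fixed edge $(v_k,v_\ell)$,
\begin{equation*}
\int_1^{d_{k\ell}(t)} s g_{k\ell}(s)\,ds \;\le\; M - (|E|-1)\,\psi_0 \;=:\; M',
\end{equation*}
a constant independent of $t$ and of the edge. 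Now the strong-repulsion hypothesis gives $\lim_{d\to 0+}\int_1^{d}s g_{k\ell}(s)\,ds = +\infty$ for each edge, so there exists $\delta_{k\ell} > 0$ with $\int_1^{d}s g_{k\ell}(s)\,ds > M'$ whenever $0 < d < \delta_{k\ell}$. Consequently $d_{k\ell}(t) \ge \delta_{k\ell}$ for all $t \ge 0$. Taking $\delta := \min_{(v_k,v_\ell)\in E}\delta_{k\ell} > 0$ (a minimum over the finite edge set) yields $d_-(p(t)) \ge \delta > 0$ for all $t$, which is the claim.

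The one point requiring a little care — and the only genuine obstacle — is justifying that $\Psi(p(t)) \le \Psi(p(0))$ holds for \emph{all} $t \ge 0$, i.e., that the trajectory does not cease to exist (by a neighboring collision or escape) before this argument can be applied. This is handled by a standard bootstrapping / maximal-interval-of-existence argument: on the maximal interval $[0,T_{\max})$ the solution is a genuine gradient trajectory, so $\Psi$ is non-increasing there, hence the bound $d_-(p(t)) \ge \delta$ holds on $[0,T_{\max})$; combined with the already-remarked impossibility of collisions (the potential would blow up), this prevents the solution from leaving any compact region of $P_G$ in finite time through the "boundary" where some $d_{ij}\to 0$, so $T_{\max} = \infty$. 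The text preceding the lemma already asserts that "solutions of system~\eqref{MODEL} exist for all time," so one may simply invoke that; the estimate above then holds on all of $[0,\infty)$.
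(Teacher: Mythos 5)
Your proposal is correct and follows essentially the same argument as the paper's proof: both combine the monotonicity of $\Psi$ along the trajectory, the uniform lower bound $\psi_0$ on each edge term, and the strong-repulsion divergence of $\int_1^d s\,g_{ij}(s)\,ds$ as $d\to 0+$ to force a uniform lower bound on neighboring distances (the paper phrases it as a contradiction at the first instant an edge length hits the threshold, while you state the same estimate directly). Your extra remark on the maximal interval of existence is a fair point the paper dispatches in the text preceding the lemma.
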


\begin{proof}
Let $\psi_0$ be defined in~\eqref{eq:defpsi0}. Then, from the condition of {\it strong repulsion}, there exists a number $d > 0$ such that
$$
\int^d_{1}  s g_{ij}(s) ds + \(|E| - 1\)\, \psi_0 > \Psi(p(0))
$$
for all $(v_i,v_j) \in E$. We now show that 
$
d_-(p(t))  > d$ for all  $t \ge 0$.  
Suppose that, to the contrary, there exists an instant $t\ge 0$ such that $\|x_j(t) - x_i(t)\| = d$ for some $(v_i,v_j)\in E$. Then, by definition of $\psi_0$, we have
$$
\Psi(p(t)) \ge  \int^d_{1}  sg_{ij}(s) ds + \(|E| - 1\)\, \psi_0 > \Psi(p(0))
$$
which contradicts the fact that $\Psi(p(t))$ is non-increasing in~$t$. This completes the proof.
\end{proof}

Note that from Lemmas~\ref{lem:phiboundedbelow} and~\ref{ELB}, if the potential function~$\Psi$ is such that
\begin{equation}\label{eq:potentialconferencepaper}
\lim_{d_+(p) \to \infty} \Psi(p) = \infty, 
\end{equation}
then each trajectory $p(t)$ of system~\eqref{MODEL} has to remain bounded, and hence converges to the {set of  equilibria}.  Yet,~\eqref{eq:potentialconferencepaper} may not hold under the condition of {\it fading attraction}. For example, if each $g_{ij}$ is a Lennard-Jones type interaction, i.e.,
$$
g_{ij}(d) = - \frac{\sigma_{ij,1}}{d^{n_{ij,1}}} + \frac{\sigma_{ij,2}}{d^{n_{ij,2}}} 
$$
with $n_{ij,1} > n_{ij,2} > 2$; then,  for all $(v_i,v_j) \in E$, we have
$$
\int^{\infty}_{1} sg_{ij}(s)ds = -\frac{\sigma_{ij,1}}{n_{ij,1} - 2} + \frac{\sigma_{ij,2}}{n_{ij,2} - 2} < \infty, 
$$
and hence $\Psi(p)$ may remain bounded as $d_+(p)$ diverges. Nevertheless, we are still able to establish the convergence of system~\eqref{MODEL}. 
We state below  the main result of the paper. 

\begin{theorem}\label{thm:MAIN}
Let $G = (V, E)$ be a connected undirected graph, and let $g_{ij}$, for $(v_i,v_j) \in E$,  be attraction/repulsion functions. Then, the multi-agent system~\eqref{MODEL} satisfies the following properties:
\begin{enumerate}
\item There exist two positive numbers $D_-$ and $D_+$ such that if $p$ is an equilibrium of system~\eqref{MODEL}, then 
$$
D_- \le d_-(p) \le d_+(p) \le D_+
$$
with $d_-(p)$ and $d_+(p)$ defined in~\eqref{eq:defd-d+}.
\item For any initial condition $p(0)\in P_G$, the trajectory $p(t)$ of system~\eqref{MODEL} converges to the set of equilibria. 
\end{enumerate}\, 
\end{theorem}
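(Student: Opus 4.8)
The plan is to reduce the theorem to a single key fact --- boundedness of trajectories --- and then locate where the dilute partitions of section~III must do the work. Since \eqref{MODEL} is the gradient flow of $\Psi$, i.e. $f=-\nabla\Psi$, Lemma~\ref{lem:phiboundedbelow} gives that $\Psi(p(t))$ is nonincreasing and bounded below, hence convergent; integrating $\frac{d}{dt}\Psi(p(t))=-\|f(p(t))\|^2$ yields $\int_0^\infty\|f(p(t))\|^2\,dt<\infty$. Lemma~\ref{ELB} keeps $d_-(p(t))$ bounded away from $0$, so the trajectory never leaves a region in which all edge lengths are bounded below. Consequently, if $\{p(t)\mid t\ge 0\}$ lies in a bounded subset of $\R^{nN}$, it lies in a compact subset $K\subset P_G$ on which $f=-\nabla\Psi$ is smooth; the $\omega$-limit set is then a nonempty compact invariant subset of $K$ on which $\Psi$ is constant, and for a gradient flow this forces it to consist of equilibria, giving part~(2). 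Thus part~(2) reduces to: \emph{every trajectory of \eqref{MODEL} remains bounded}. Part~(1) is the ``static'' analogue of boundedness.

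For part~(1), lower bound: let $p$ be an equilibrium and let $(v_i,v_j)$ attain $\delta:=d_-(p)$; if $\delta\ge\alpha_-$ we are done, so assume $\delta<\alpha_-$. In $f_i(p)=0$ the term $g_{ij}(\delta)(x_j-x_i)$ has magnitude $|\delta g_{ij}(\delta)|\to\infty$ as $\delta\to 0$ by strong repulsion, while every neighbor $v_k$ of $v_i$ with $g_{ik}(d_{ik})\ge 0$ has $d_{ik}>\alpha_-$ and hence contributes a force of magnitude at most $\sup_{d\ge\alpha_-}d\,g_{ik}(d)<\infty$; balancing the vectors forces a second comparably short edge at $v_i$, and iterating this over the finitely many short edges --- an extremal argument on the subgraph of short edges --- produces a uniform $D_-$. (The delicate point is ruling out cancellations among several short edges.) For the upper bound, take the agent $x_m$ farthest from the conserved centroid $c$, so that $\langle x_k-x_m,\,x_m-c\rangle\le 0$ for all $k$; projecting $f_m(p)=0$ onto $x_m-c$ shows the neighbors of $v_m$ cannot all attract $x_m$, so some $v_{k_0}$ has $g_{mk_0}(d_{mk_0})\le 0$, hence $d_{mk_0}<\alpha_+$. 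Peeling off such extreme agents --- or, equivalently, arguing by contradiction against a sequence of equilibria with $d_+\to\infty$ and extracting a ``diluting'' cluster via a dilute partition that cannot satisfy the equilibrium equations --- bounds every edge length by a uniform $D_+$; since $G$ is connected on $N$ vertices, the formation size is then at most $(N-1)D_+$.

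For part~(2) the dilute partition is essential. For each $t$, a dilute partition of $p(t)$ groups the agents so that within a cluster the mutual distances lie in the ordinary range, while edges crossing between clusters are so long that the crossing interactions live entirely in the fading tail $d\,g_{ij}(d)\to 0$. Two properties must be pinned down: (a) the velocity of a cluster's centroid equals the net crossing force on that cluster, since intra-cluster terms cancel in pairs, and this net force is controlled by the fading tail yet retains an attractive character resisting separation; (b) while a cluster stays intact its internal configuration remains in a compact region, because the potential of the induced sub-system is bounded below (Lemma~\ref{lem:phiboundedbelow} applied to the subgraph) and controls its internal diameter. One then shows the dilute partition can refine only finitely often and no cluster can escape to infinity --- e.g. by exhibiting a modified Lyapunov function equal to $\Psi$ plus a correction accounting for the inter-cluster separations and checking it stays bounded, or by a direct estimate that the total inter-cluster drift $\int_0^\infty(\cdot)\,dt$ is finite. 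This gives boundedness of $p(t)$, and the reduction above finishes the proof.

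The genuinely hard step is showing that no cluster runs off to infinity even though well-separated clusters attract each other only through the fading tail: a weak but persistent attraction acting over an infinite horizon is not obviously integrable, and controlling it requires exploiting the combinatorial-and-metric bookkeeping of the dilute partition --- the technical content of sections~III--IV. A secondary, more routine obstacle is making the gradient-flow/LaSalle conclusion rigorous given that $\Psi$ is defined only on $P_G$, for which Lemma~\ref{ELB} is exactly what is needed.
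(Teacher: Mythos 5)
Your reduction of part~(2) to boundedness of trajectories (via Lemmas~\ref{lem:phiboundedbelow} and~\ref{ELB} plus the standard gradient-flow/LaSalle argument on a compact invariant region) matches the paper, and your identification of the central difficulty --- that well-separated clusters attract only through the fading tail, so integrability of the drift is not automatic --- is exactly right. But the proposal stops at the point where the actual work begins, and the two mechanisms the paper uses to close the argument are absent. First, to upgrade the clustering from the sampled times $t_i$ of the diluting sequence to \emph{all} large $t$ (so that the trajectory is genuinely self-clustering), the paper uses a dissipation-zone argument (Proposition~\ref{pro:defnud}, Lemmas~\ref{lem:boundedbelowdistance} and~\ref{lem:upperboundonvelocity}): whenever some edge length crosses a fixed threshold $>D_+$, the vector field has norm bounded below there, the trajectory must spend at least a fixed time $\tau_0$ near that level set, and hence $\Psi$ drops by a fixed $\epsilon_0>0$; since $\Psi(q(t))$ converges, this can happen only finitely often. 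Second, to show a self-clustering trajectory with inter-cluster distances $>\alpha_+$ is bounded (Proposition~\ref{pro:selfclustering}), the paper does not use a modified Lyapunov function or a finiteness-of-drift estimate; it introduces the hierarchy $\Pi(k\,;t)$ of maximal centroid norms over unions of $k$ clusters and shows (Lemma~\ref{lem:boundedforPI}) that if $\Pi(1\,;t)$ ever reaches a large value $r$, then recursively $\Pi(k+1\,;t_k)\ge r-2\sum_{i=0}^{k-1}N^i l_0$, contradicting $\Pi(m\,;t)\equiv 0$. Your alternatives (a) and (b) are plausible-sounding but not substantiated, and the ``direct estimate that the total inter-cluster drift is finite'' is precisely what fails naively under fading attraction.

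Part~(1) also has gaps. For the lower bound you yourself flag the delicate point --- cancellation among several short edges --- and then do not resolve it; the paper's Proposition~\ref{pro:nmvecfld} handles it by induction on $|V|$, showing that a bounded net force at a vertex with one very short edge forces a chain of ever-shorter edges and an infinite strictly increasing sequence of first coordinates, which is impossible in a finite graph. For the upper bound, ``peeling off'' the agent farthest from the centroid does not work as stated: after removing that agent, the remaining sub-configuration is not an equilibrium of the induced sub-system, so the argument cannot be iterated. The paper instead projects onto a coordinate axis and finds a bi-partition $V=V'\sqcup V''$ across which every edge has first-coordinate gap exceeding $\alpha_+$, so all crossing interactions are attractive and the sum $\sum_{v_i\in V'}x_i^1$ has strictly positive derivative, contradicting equilibrium; this yields the explicit bound $D_+=(N-1)\alpha_+$.
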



In the remainder of the paper, we establish properties of system~\eqref{MODEL} that are needed to prove Theorem~\ref{thm:MAIN}. 
In section~\ref{sec:dp}, we introduce a class of partitions, termed {\it dilute partitions}, of frameworks, which decomposes frameworks into disjoint sub-frameworks satisfying certain 
combinatorial and metric properties. 
This is a rich question, related to the $k$-means clustering~\cite{macqueen1967some} and its variants.  We then apply dilute partitions to unbounded sequences of frameworks,  and describe relevant properties associated with it. 
 In section~\ref{sec:cgf}, we apply dilute partitions to frameworks  along a class of trajectories generated by system~\eqref{MODEL}, and establish certain path behavior of the trajectories, which is relevant to the proof of system convergence.  




\section{Dilute Partitions and Diluting Sequences}\label{sec:dp}

Let $(G, p)$ be a framework. We say that $\sigma = \{(G_i, p_i)\}^m_{i=1}$, with $G_i = (V_i, E_i)$,  is a {\bf partition} of $(G,p)$ if $\sigma$ satisfies the following conditions: 
\begin{enumerate}
\item The subsets $\{V_1,\ldots,v_m\}$ of $V$ form a partition:  
\begin{equation}\label{eq:inducedpartitionofV}
V = \sqcup^m_{i=1} V_i.
\end{equation}
\item Each  $G_i$ is a subgraph of $G$ induced by $V_i$, and each $p_i$ is a sub-configuration of $p$ associated with $G_i$. 
\end{enumerate}
We refer to~\eqref{eq:inducedpartitionofV} the partition of $V$ {\it induced by $\sigma$}. 

Now, for a partition $\sigma = \{(G_i, p_i)\}^m_{i=1}$ of a framework $(G,p)$,  let the {diameter} of a sub-configuration $p_i$ be 
$$\phi(p_i) : = \max \left \{\| x_k - x_j\| \mid v_j, v_k \in V_i \right\}. $$ 
 We then define the {\bf intra-cluster distance} of $\sigma$ by
\begin{equation}\label{eq:intradistance}
\cal{L}_-(\sigma): = \max\left \{ \phi(p_i) \mid 1\le i \le m  \right\}.
\end{equation}
Given two distinct sub-frameworks $(G_i,p_i)$ and $(G_j, p_j)$, let $d(p_i,p_j)$ be the distance between $p_i$ and $p_j$: 
$$
d(p_i,p_j): = \min\left\{\|x_{i'} - x_{j'}\| \mid  v_{i'} \in V_i, \,  v_{j'} \in V_j \right \}. 
 $$
We say that $(G_i,p_i)$ and $(G_j, p_j)$ are {\bf adjacent} if there is an edge $(v_{i'},v_{j'})$ of $G$, with $v_{i'}\in V_i$ and $v_{j'} \in V_j$.  We then define the {\bf inter-cluster distance} of $\sigma$ by 
\begin{equation}\label{eq:interdistance}
\cal{L}_+(\sigma) := \min_{(i,j)} \left \{ d(p_i,p_j) \right \}, 
\end{equation} 
where the minimum is taken over the pairs $(i,j)$ for $(G_i,p_i)$ and $(G_j,p_j)$ to be adjacent. 
With the definitions and notations above, we define dilute partitions:

\begin{Definition}[Dilute partitions]\label{def:dilutepartition}
Let $(G, p)$ be a framework, with $G$ connected and $p\in P_G$. 
A partition $\sigma = \{(G_i, p_i)\}^m_{i=1}$ of $(G,p)$ is a \textbf{dilute partition} with respect to a positive number~$l$ if it satisfies the following two conditions:
\begin{enumerate}
\item Each $G_i$ is connected. 
\item If $(G_i,p_i)$ and $(G_j,p_j)$ are adjacent, then  
$d(p_i,p_j) > l$ and ${\max\{\phi(p_i), \phi(p_j)\}} < {d(p_i,p_j)}$.  
\end{enumerate}\,
\end{Definition}



In the remainder of the section, we fix a connected graph $G$, and assume that $G$ has at least two vertices. 
For a positive number $l$ and a configuration $p\in P_G$, we let $\Sigma(l\; p)$ be the set of dilute partitions of $(G,p)$ with respect to $l$.  
Note that $\Sigma(l\; p)$ is nonempty because $\Sigma(l\; p)$ always contains the {\bf trivial partition}, namely,  the partition which has only one cluster containing all the agents. A partition of $(G,p)$ is said to be {\it nontrivial} if it is not the trivial partition. 
We also note that from Definition~\ref{def:dilutepartition}, if $\sigma \in \Sigma(l,p)$ and $l \ge l' >0$, then $\sigma \in \Sigma(l',p)$. In other words, we have $\Sigma(l,p) \subseteq \Sigma(l',p)$ for all configurations $p\in P_G$.

We will now state the main result of the section, which relates dilute partitions to sequences of diverging configurations: 
\vspace{3pt}
\noindent
{\bf Diluting sequence}. 
Let $\{p(i)\}_{i\in\N}$ be a sequence of configurations in $P_G$. We say that $\{p(i)\}_{i\in\N}$ is {\bf unbounded} if for any $d > 0$, there exists an $i\in\N$ such that $\phi(p(i)) > d$.  
We now formalize in detail the following fact: for any unbounded sequence $\{p(i)\}_{i\in\N}$, there is a subsequence $\{p(n_i)\}_{i\in\N}$ such that (i) the agents in $p(n_i)$, for $i\in\N$, are clustered in the same way; (ii) the inter-cluster distances diverge while the intra-cluster distances remain bounded. Precisely, we state the following result:


\begin{theorem}[Diluting sequence]\label{CCOSODC} 
Let $\{p(i)\}_{i\in\N}$ be an unbounded sequence in $P_G$,  
and $\{l_i\}_{i\in \N} $ be a sequence of positive real numbers, with $\lim_{i\to \infty} l_i = \infty $. 
Then, there is a subsequence $\{p(n_i)\}_{i\in\mathbb{N}}$ out of $\{p(i)\}_{i\in\N}$, together with a sequence of nontrivial dilute partitions 
$$\left \{\sigma_i\in \Sigma(l_i\; p(n_i)) \right \}_{ i\in\N},$$  
such that the following properties are satisfied: 
\begin{enumerate}
\item All partitions $\sigma_i$ induce the same partition of $V$.
\item There is a positive number $L_0$ such that
$$
\cal{L}_-(\sigma_i) \le L_0, \hspace{10pt} \forall i\in \mathbb{N}.   
$$
\end{enumerate}\, 
\end{theorem}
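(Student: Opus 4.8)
The plan is to build each $\sigma_i$ by clustering the vertices of $G$ along the edges whose lengths stay bounded on a suitable subsequence, so that the edges running \emph{between} clusters are exactly those whose lengths diverge; the clusters then have uniformly bounded diameter while adjacent clusters get pushed arbitrarily far apart. \emph{Extracting the subsequence.} Since $G$ has finitely many edges, by iterated subsequence extraction (Bolzano--Weierstrass in the compact set $[0,+\infty]$) I would first pass to a subsequence of $\{p(i)\}_{i\in\mathbb{N}}$ along which the length of every edge of $G$ either converges to a finite limit or diverges to $+\infty$; let $E_\infty\subseteq E$ be the set of edges of the second kind. Because $G$ is connected, the diameter $\phi(p(i))$ never exceeds $(N-1)$ times the largest edge length, so unboundedness of $\{p(i)\}$ forces $E_\infty\neq\emptyset$. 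Fix a constant $C$ strictly larger than all the (finite) limiting lengths of the edges in $E\setminus E_\infty$ (take $C:=1$ if $E_\infty=E$) and set $L_0:=(N-1)C$. I would then thin the subsequence once more to obtain $\{p(n_i)\}_{i\in\mathbb{N}}$ so that, for every $i$, every edge of $E\setminus E_\infty$ has length $<C$ at $p(n_i)$, while every edge of $E_\infty$ has length exceeding $\max\{\,l_i+2L_0,\ 3L_0\,\}$ at $p(n_i)$; this is possible because for each fixed $i$ the number $l_i$ is fixed while the finitely many edges of $E_\infty$ diverge, so $n_i$ can be chosen inductively large (and $>n_{i-1}$).

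\emph{The partition and its elementary properties.} Define $\sigma_i$ to be the partition of $(G,p(n_i))$ whose clusters are the vertex sets of the connected components of the spanning subgraph $(V,E\setminus E_\infty)$, equipped with the induced subgraphs of $G$ and the corresponding sub-configurations of $p(n_i)$. Since $E_\infty$ does not depend on $i$, all $\sigma_i$ induce the same partition of $V$, giving property~(1). Each cluster $(G_a,p_a)$ is connected because $G_a$ contains a connected spanning subgraph of its vertex set formed by edges of $E\setminus E_\infty$; moreover any two vertices of a cluster are joined by a path of at most $N-1$ such edges, each of length $<C$ at $p(n_i)$, so $\phi(p_a)<L_0$ and hence $\mathcal{L}_-(\sigma_i)<L_0$ for all $i$, which is property~(2). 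Finally $\sigma_i$ is nontrivial: if it had a single cluster then $(V,E\setminus E_\infty)$ would be connected, so the endpoints of any edge of the nonempty set $E_\infty$ would lie within distance $L_0$ at $p(n_i)$, contradicting that this distance exceeds $3L_0$.

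\emph{The dilute condition, and the main obstacle.} The same path argument shows that no edge of $E_\infty$ can join two vertices of one cluster, so the edges of $G$ between distinct clusters are exactly the edges of $E_\infty$. Suppose $(G_a,p_a)$ and $(G_b,p_b)$ are adjacent, witnessed by an edge $(v_u,v_w)\in E$ with $v_u$ in the first cluster and $v_w$ in the second; then $(v_u,v_w)\in E_\infty$, so its length at $p(n_i)$ exceeds $\max\{\,l_i+2L_0,\ 3L_0\,\}$. For arbitrary vertices $v_{u'}$ in the first cluster and $v_{w'}$ in the second, the triangle inequality together with $\phi(p_a),\phi(p_b)<L_0$ shows the distance between $x_{u'}$ and $x_{w'}$ at $p(n_i)$ is larger than the length of $(v_u,v_w)$ minus $2L_0$, hence is $>l_i$ and $>L_0$. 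Minimizing over such pairs gives $d(p_a,p_b)>l_i$ and $\max\{\phi(p_a),\phi(p_b)\}<L_0<d(p_a,p_b)$, i.e.\ condition~(2) of Definition~\ref{def:dilutepartition} holds with $l=l_i$; combined with connectedness of the clusters, this yields $\sigma_i\in\Sigma(l_i\; p(n_i))$, completing the argument. The delicate point is exactly the bookkeeping in the subsequence extraction: the diverging edges must be forced to outgrow not merely $l_i$ but $l_i$ together with a fixed buffer of size $2L_0$ (and also $3L_0$ outright), since this buffer, which absorbs the uniformly bounded cluster diameters, is precisely what makes both inequalities of the dilute condition hold simultaneously for \emph{every} $i$, including small $i$ where $l_i$ may be negligible.
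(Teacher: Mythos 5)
Your proof is correct in substance but follows a genuinely different, and more direct, route than the paper. The paper proceeds by induction on the number of vertices: it first proves Proposition~\ref{NC} (a framework of large diameter admits a nontrivial dilute partition), uses it to obtain nontrivial partitions $\sigma_i\in\Sigma(l_i\; p(i))$, pigeonholes over the finitely many partitions of $V$ to make them induce a common one, and then, if $\mathcal{L}_-(\sigma_i)$ is unbounded, recurses into the offending cluster via the induction hypothesis and splices the refined partition back in. You instead bypass Proposition~\ref{NC} and the induction entirely: one Bolzano--Weierstrass extraction in $[0,+\infty]^{|E|}$ splits $E$ into edges of bounded length and edges of diverging length, and the connected components of $(V,E\setminus E_\infty)$ supply the clusters, with property~(1) automatic (the partition is determined by $E_\infty$, so no pigeonhole is needed) and property~(2) following from the $(N-1)C$ path bound; your buffer $\max\{\,l_i+2L_0,\ 3L_0\,\}$ correctly absorbs the cluster diameters so that both inequalities in condition~(2) of Definition~\ref{def:dilutepartition} hold for every $i$, including small $i$. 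This is cleaner for Theorem~\ref{CCOSODC} itself; what the paper's longer route buys is Proposition~\ref{NC} as a standalone tool, which it reuses elsewhere (via Remark~\ref{rmk:2}) in the proof of Proposition~\ref{pro:defnud}. One small repair is needed in your bookkeeping: pass to a subsequence along which $\phi(p(i))\to\infty$ \emph{before} the Bolzano--Weierstrass step. As written, you extract convergence of all edge lengths from the original sequence and then invoke unboundedness to conclude $E_\infty\neq\emptyset$; but the paper's notion of unboundedness (for every $d$ \emph{some} index has $\phi(p(i))>d$) is not inherited by arbitrary subsequences, so your extracted subsequence could a priori have every edge length converging to a finite limit. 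Reordering the two extractions makes the inference $E_\infty\neq\emptyset$ valid, and the rest of your argument goes through unchanged.
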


\noindent 
We refer to $\{p(n(i))\}_{i\in \mathbb{N}}$ as a {\bf diluting sequence}.    
\vspace{3pt}

The remainder of the section is organized to establish Theorem~\ref{CCOSODC}. In particular, 
we establish in Subsection~\ref{ssec:nontrivialclustering} a sufficient condition for a framework $(G,p)$ to admit a nontrivial dilute partition.  We then provide, in Subsection~\ref{ssec:ProofIII}, a proof of Theorem~\ref{CCOSODC}. 

\subsection{Existence of nontrivial dilute partitions}\label{ssec:nontrivialclustering}

Naturally, given a framework $(G,p)$,  there is a partial order defined over the set of partitions of $(G,p)$:  Let $\sigma$ and $\sigma'$ be two partitions of $(G,p)$. Let  
$V = \sqcup^m_{i = 1} V_i$ and  $V = \sqcup^{m'}_{i = 1} V'_i$ 
be the partitions of the vertex set $V$ induced by $\sigma$ and $\sigma'$, respectively. We say that $\sigma'$ is {\bf coarser} than $\sigma$, or simply write $\sigma' \prec \sigma$, if $m > m'$, and moreover, each $V_i$ is a subset of $V'_j$ for some $j = 1,\ldots, m'$.   
Recall that given a configuration $p$,  $\phi(p)$ is the diameter of $p$. Now, fix a positive number $l>0$, we establish the following result:

\begin{pro}\label{NC} Let $(G,p)$  be a framework, with $G$ a connected graph of at least two vertices and $p\in P_G$. For a positive number $l$, there exists a threshold $d>0$ such that if $\phi(p) > d$, then there is a nontrivial partition in $\Sigma(l\; p)$.
\end{pro}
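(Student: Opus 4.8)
The plan is to build the partition by a \emph{multi-scale single-linkage} clustering on the edges of $G$, together with a pigeonhole argument over the finitely many partitions of $V$ to pin down a scale at which the clustering is stable across a wide band of thresholds; that stability is exactly what converts a combinatorial gap into a metric gap of the kind demanded by Definition~\ref{def:dilutepartition}.

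For $\rho > 0$ let $G_\rho$ be the spanning subgraph of $G$ keeping only the edges $(v_i,v_j)$ with $\|x_i - x_j\| \le \rho$, and let $\pi(\rho)$ be the partition of $(G,p)$ whose clusters are the connected components of $G_\rho$, equipped with the induced subgraphs and sub-configurations. Every cluster of $\pi(\rho)$ is then connected in $G$, so condition~(1) of Definition~\ref{def:dilutepartition} is automatic, and $\rho \mapsto \pi(\rho)$ is monotone in the sense that $\rho \le \rho'$ forces $\pi(\rho')$ to be coarser than or equal to $\pi(\rho)$. Fix $C := 3N$, set $d := (N-1)\, l\, C^{N}$, and suppose $\phi(p) > d$. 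First I would note that $\pi(\rho)$ is nontrivial whenever $\rho \le l C^{N}$: otherwise $G_\rho$ is connected, hence contains a spanning tree whose at most $N-1$ edges all have length $\le \rho$, which forces $\phi(p) \le (N-1)\rho \le d$ by the triangle inequality, a contradiction. Now look at the $N+1$ thresholds $\rho_k := l C^{k}$, $k = 0,1,\dots,N$: the partitions $\pi(\rho_0),\dots,\pi(\rho_N)$ are all nontrivial and form a chain under the coarsening order, so, since a chain of nontrivial partitions of an $N$-element set has at most $N-1$ distinct members (each strict coarsening lowers the number of clusters, a quantity lying between $2$ and $N$), two consecutive ones must coincide, say $\pi(\rho_k) = \pi(\rho_{k+1})$ with $k \in \{0,\dots,N-1\}$. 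Set $\sigma := \pi(\rho_k)$, which is nontrivial.

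It remains to verify that $\sigma = \{(G_i,p_i)\}_{i=1}^m$ is dilute with respect to $l$. Connectivity of each $G_i$ is free, as observed. For the diameter bound, any two vertices of a cluster $V_i$ are joined in $G_{\rho_k}$ by a path of at most $N-1$ edges of length $\le \rho_k$, so $\phi(p_i) \le (N-1)\rho_k$. For the separation, suppose $(G_i,p_i)$ and $(G_j,p_j)$ are adjacent through an edge $(v_{i'},v_{j'}) \in E$ with $v_{i'}\in V_i$ and $v_{j'}\in V_j$; since $\sigma = \pi(\rho_{k+1}) = \pi(C\rho_k)$, these two vertices lie in distinct components of $G_{C\rho_k}$, so $\|x_{i'}-x_{j'}\| > C\rho_k$, and hence for any $v_a \in V_i$ and $v_b \in V_j$,
\[
\|x_a - x_b\| \ge \|x_{i'} - x_{j'}\| - \phi(p_i) - \phi(p_j) > C\rho_k - 2(N-1)\rho_k = (N+2)\,\rho_k ,
\]
so that $d(p_i,p_j) > (N+2)\rho_k$, which exceeds $l$ (as $\rho_k \ge l$) and exceeds $\max\{\phi(p_i),\phi(p_j)\} \le (N-1)\rho_k$. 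Thus $\sigma$ is a nontrivial element of $\Sigma(l\; p)$.

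The main obstacle, and the reason a single-scale version of the argument fails, is that the two requirements on a dilute partition pull against each other: keeping every cluster connected in $G$ is natural for an edge-based single-linkage construction, whereas the inequalities $d(p_i,p_j) > l$ and $d(p_i,p_j) > \max\{\phi(p_i),\phi(p_j)\}$ are purely metric and are not controlled by any single threshold, since two edge-separated clusters may still contain nearby non-adjacent vertices. Insisting that the component structure be unchanged across the whole band $[\rho_k, C\rho_k]$ is what fixes this: it makes every inter-cluster edge long (length $> C\rho_k$) while every intra-cluster diameter stays $\le (N-1)\rho_k$, and taking $C$ large relative to $N$ lets the metric gap dominate. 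The only quantitative ingredients are the crude bound relating a diameter to a maximal edge length, the triangle inequality, and the pigeonhole count of partitions.
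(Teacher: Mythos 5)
Your proof is correct. It is in the same family as the paper's argument---both cluster the configuration at successively larger scales and exploit the fact that a strictly coarsening chain of nontrivial partitions of $V$ can have at most $N-1$ distinct members---but the execution is genuinely different. The paper argues by contradiction and merges adaptively: it first groups vertices joined by chains of edges of length $\le l$, deduces the diameter bound $l'=(N-1)l$, then merges adjacent clusters whose cluster distance is $\le l'$ to get a new bound $l''=(2m-1)l'$, and so on; since the resulting chain $\sigma\succ\sigma'\succ\sigma''\succ\cdots$ must terminate, the terminal partition is dilute at the last (largest) scale, contradicting the assumed triviality of $\Sigma(l\,;p)$. You instead run single-linkage at the fixed geometric ladder $\rho_k=l(3N)^k$, $k=0,\dots,N$, and pigeonhole to find two consecutive scales at which the component structure is unchanged; stability across a factor-$3N$ band is what converts the combinatorial gap into the metric inequalities of Definition~\ref{def:dilutepartition}, since every inter-cluster edge then has length $>3N\rho_k$ while every cluster diameter is $\le(N-1)\rho_k$, and the triangle inequality through one such edge controls $d(p_i,p_j)$ even though that quantity is a minimum over all vertex pairs rather than over edges. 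Your route buys a direct (non-contradiction) proof and an explicit threshold $d=(N-1)\,l\,(3N)^N$, where the paper's threshold is left implicit; the paper's adaptive merging, on the other hand, is the version that generalizes most naturally to the coarsening chains used later in the proof of Theorem~\ref{CCOSODC}. All the quantitative steps in your write-up (nontriviality of $\pi(\rho)$ for $\rho\le l(3N)^N$ via a spanning tree, the cluster-count pigeonhole, and the bound $C\rho_k-2(N-1)\rho_k=(N+2)\rho_k>\max\{l,(N-1)\rho_k\}$) check out.
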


\begin{proof} The proof will be carried out by contradiction: we assume that  for any $d>0$, there exists a configuration $p\in P_G$, with  $\phi(p) \ge d$, such that $\Sigma(l\; p)$ is a singleton, comprised only of the trivial partition.  

Pick any such configuration $p$, with $\phi(p)$ sufficiently large. To proceed, 
first note that there exists at least a pair of agents $x_i$ and $x_j$ in $p$, with $(v_i,v_j)\in E$, such that $\|x_j -x_i\|\le l$. This holds because otherwise,  the agent-wise partition of $(G,p)$---  
$\left\{\(\{i\}, \varnothing\), x_i\right\}^N_{i = 1}$, 
with $(\{i\}, \varnothing)$ a graph of one single vertex~$i$ and no edge---
is a nontrivial dilute partition in $\Sigma(l\; p)$.  
Now, define a partition $\sigma = \{(G_i,p_i)\}^{m}_{i=1}$ of $(G,p)$, with $G_i = (V_i,E_i)$, as follows: 
Two vertices $v_{j}$ and $v_{j'}$ are in the same subset $V_i$ if, and only if,  there is a chain of vertices $v_{j_1},\ldots,v_{j_q}$ in $V$,  with ${v_{j_1}}=v_j$ and $v_{j_q}=v_{j'}$,  such that 
\begin{equation}\label{eq:definingrule}
\(v_{j_k}, v_{j_{k+1}} \) \in E \hspace{10pt} \mbox{ and } \hspace{10pt} \| x_{j_k} - x_{j_{k+1}}\| \le l 
\end{equation}
for all $k = 1,\ldots, q-1$. 

We describe below some properties of the newly constructed partition $\sigma$. First, note that from~\eqref{eq:definingrule},  each subgraph $G_i$ is connected and $\phi(p_i)$ is bounded above; indeed, we have
\begin{equation}\label{eq:evaluatesize}
\phi(p_i)< l', \hspace{10pt} \mbox{ for } \hspace{5pt} l' := (N-1 )l.  
\end{equation} 
Furthermore, we have
\begin{equation}\label{eq:1mN}
1< m < N.
\end{equation}
To see this, first note that there exists at least an edge $(v_i,v_j)\in E$ such that $\|x_j - x_i\| \le l$, and hence there is at least a subgraph $G_k$ having more than one vertex, which implies that $m < N$. Also, note that $\phi(p)$ can be made sufficiently large; in particular, if we let $\phi(p)\ge l'$, then, from~\eqref{eq:evaluatesize}, we must have $m > 1$. 

Now, suppose that for any two adjacent frameworks $(G_i,p_i)$ and $(G_j,p_j)$, we have
$
d(p_i,p_j) > l'
$;  
then, $\sigma$ is a nontrivial partition in $\Sigma(l' \; p)$. Since $l' \ge l$, we have $\sigma \in \Sigma(l\; p)$, which is a contradiction. 
We thus assume that there are two adjacent frameworks $(G_i,p_i)$ and $(G_j,p_j)$ such that  $d(p_i,p_j) \le l'$. 
Similarly, using this condition, we define a  partition $\sigma' = \{(G'_i, \, p'_i)\}^{m'}_{i=1}$ of $(G,p)$, with $G'_i = (V'_i, E'_i)$,  
as follows: Each $V'_i$ is a union of certain subsets $V_j$, and two subsets $V_{j}$ and $V_{j'}$ are belong to the same set $V'_i$ if, and only if, there is a chain of subsets $V_{j_1},\ldots,V_{j_{q'}}$,  with $V_{j_1}=V_j$ and $V_{j_{q'}}=V_{j'}$,  such that 
$(G_{j_k}, p_{j_k})$ and  $(G_{j_{k+1}}, p_{j_{k+1}})$ are adjacent, and moreover, 
$$d(p_{j_k},p_{j_{k+1}})  \le l', \hspace{10pt}  \forall \, k = 1,\ldots, q'-1.$$
Similarly, by construction,  we have that for each $i = 1,\ldots, m'$, the subgraph $G'_i$ is connected, and  
$$ \phi(p'_i) \le l'', \hspace{10pt} \mbox{ for } \hspace{5pt} l'' :=  (2m -1) \, l'. $$ 
Furthermore, by applying the same arguments as used to prove~\eqref{eq:1mN}, we obtain  
$1 < m' < m$; indeed, we have $\sigma \succ \sigma'$.  

We then repeat the argument as above. Specifically, we assume  that there is at least a pair of adjacent frameworks $(G'_i, p'_i)$ and $(G'_j, p'_j)$ with $d(p'_i,p'_j) \le l''$. Using this  as the defining condition, we obtain another nontrivial partition $\sigma''$ of $(G,p)$, with $\sigma'' \prec \sigma' $. 
Continuing with this process, we obtain a chain of partitions of $(G,p)$ as 
$
\sigma \succ \sigma' \succ \sigma'' \succ\cdots
$. 
Since there are only finitely many partitions of $(G,p)$, the chain terminates in finite steps. For simplicity, but without loss of any generality, we assume that the chain stops at $\sigma'$. In other words, for any two adjacent frameworks $(G'_i, p'_i)$ and $(G'_j, p'_j)$, we have $d(p'_i,p'_j)> l'' $. 
But then, $\sigma'$ is in  $\Sigma(l''\; p)$. Since $l'' \ge l$,  we have $\sigma'\in \Sigma(l\; p)$,  which is a contradiction. This completes the proof. 
\end{proof}

\begin{Remark}\label{rmk:2}
Note that for any configuration $p$, we have $\phi(p) \ge d_+(p)$. Thus, from Proposition~\ref{NC}, we have that for any $l > 0$, there exists $d > 0$ such that if $d_+(p)> d$, then there is a nontrivial partition in $\Sigma(l\; p)$.
\end{Remark}

\subsection{Proof of Theorem~\ref{CCOSODC}}\label{ssec:ProofIII}
For simplicity but without loss of generality, we assume  that both sequences $\{ \phi(p(i)) \}_{i\in \N}$ and $\{l_i\}_{i\in \N}$ monotonically increase and approach to infinity. Note that in the most general case, the condition of monotonicity can be achieved by passing the original sequences to subsequences.  
The proof of Theorem~\ref{CCOSODC} is carried out by induction on the number of vertices of $G$. 

For the base case $N = 2$, we  
write $p(i) = (x_1(i),x_2(i))$; then $\phi(p(i)) = \|x_2(i) - x_1(i)\|$. 
For simplicity, we assume that 
$
\phi(p(i)) > l_i 
$ for all $ i\in\N$ (without passing to a subsequence).  
Let $\sigma_i$ be the agent-wise partition of $(G,p(i))$. Then, the following hold: (i) each $\sigma_i$ is in $\Sigma(l_i\; p(i))$; (ii) $\cal{L}_-(\sigma_i) = 0$ for all $i\in \N$; and (iii)  all the $\sigma_i$  induce the same partition of $V$, i.e., $V = \{v_1\} \cup\{v_2\}$. This establishes the base case. 

For the inductive step, we assume that Theorem~\ref{CCOSODC} holds for $N \le k-1$, and prove for $N = k$. 
Since $\{\phi(p(i))\}_{i\in\N}$ monotonically increases and approaches to infinity,  from Proposition~\ref{NC}, we have that for each $i\in \N$, there is a number
$j_i\in \mathbb{N}$ such that  if $j \ge j_i$, then there is a nontrivial partition of $(G,p(j))$ in $\Sigma(l_i, p(j))$. Without loss of generality, we assume that $j_i = i$ for all $i\in \N$. 
So, for each framework $(G,p(i))$, there is a nontrivial partition $\sigma_i$ in $\Sigma(l_i, p(i))$. 
Since there are only finitely many  partitions of~$V$, there must be a  subsequence of $\{\sigma_i\}_{i\in \N}$ such that all the partitions in the subsequence induce the same partition of $V$. Again, for simplicity but without loss of generality, we assume that the subsequence can be chosen as the original sequence $\{\sigma_i\}_{i\in \N}$ itself. 

Now, suppose that $\{\cal{L}_-(\sigma_i)\}_{i\in\N}$ is bounded; then, the sequence $\{p(i)\}_{i\in\N}$ is a diluting sequence, and hence we complete   the proof. 
We thus assume that $\{\cal{L}_-(\sigma_i)\}_{i\in \N}$ is unbounded.  
First, let 
$V = \sqcup^m_{j=1} V_j$ be the partition of $V$ induced by $\sigma_i$, for all $i\in \N$. Let $G_j$ be the subgraph induced by $V_j$, and write 
$$ \sigma_i = \{(G_j, p_j(i))\}^m_{j=1}.$$ 
Without loss of generality, we assume that  
$\{p_1(i)\}_{i\in \N}$ is unbounded. For simplicity, we assume that all the other sequences $\{p_j(i)\}_{i\in\N}$, for $j = 2,\ldots, m$, are bounded. But the arguments below can be used to prove general cases.

Since all the partitions $\sigma_i$ are nontrivial, we have that $G_1$ is a proper subgraph of $G$. 
For a framework $(G_1, p_1)$, with $p_1\in P_{G_1}$,  and a positive number $l$, let $\Sigma_1(l\; p_1)$ be the set of dilute partitions of $(G_1, p_1)$ with respect to $l$. Appealing to the induction hypothesis, we obtain a subsequence of configurations $\{p_1(n_i)\}_{i\in \N}$, with $n_i \ge i$, and a sequence of nontrivial partitions of $(G_1, p_1(n_i))$:   
$$\left \{\sigma'_i \in \Sigma_1(l_i\; p_1(n_i)) \right\}_{ i\in \N }.$$ 
The partitions above satisfy the following two conditions: 
\begin{enumerate}
\item[a)] All $\sigma'_i$ induce the same partition of $V_1$:  
$
V_1 = \sqcup^{m'}_{j = 1} V_{1_j}
$. 
\item[b)] There is a positive number $L'_0$ such that
$$
\cal{L}_-(\sigma'_i) \le L'_0 \hspace{10pt}
  \forall i\in\N.$$ 
\end{enumerate}\,

We now use $\sigma'_i$ and $\sigma_{n_i}$ to construct a new partition $\sigma_i^*$ of $(G, p(n_i))$:  
First, note that since $\{l_i\}_{i\in \N}$ monotonically increases and $n_i \ge i$ for all $i \in \N$, we have $ l_{n_i} \ge l_i$ for all $i\in\N$. 
So, if we write $$\sigma'_i = \{(G_{1_j}, p_{1_j}(n_i))\}^{m'}_{j = 1},$$ and define a partition of $(G,p(n_i))$ as follows:
$$
\sigma^*_i := \{G_{1_j}, p_{1_j}(n_i)\}^{m'}_{j = 1} \cup \{G_j, p_{j}(n_i)\}^{m}_{j = 2},
$$
then $\sigma^*_i$ is in fact an element in $\Sigma(l_i\; p(n_i))$. 
Furthermore,  from condition~b) above, we have
$$
\cal{L}_-(\sigma^*_i) \le \max \left\{ L'_0, \, \phi(p_2(n_i)),\ldots, \phi(p_m(n_i))\right\}.
$$
Since each sequence $\{\phi(p_j(n_i))\}_{i\in\N}$, for $j= 2,\ldots, m$, is by assumption bounded above, we conclude that there exists a positive number $L_0$ such that
$
\cal{L}_-(\sigma^*_i) \le L_0
$  for all  $i\in\N$. 
We have thus shown that $\{p(n_i)\}_{i\in \N}$ is a diluting sequence. \hfill{$\blacksquare$}

\section{Analysis and Proof of Theorem~\ref{thm:MAIN}}\label{sec:cgf}
This section is devoted to the proof of Theorem~\ref{thm:MAIN}.   We start with a brief outline of the proof.  In subsection~\ref{sec:BSE}, we establish the first part of Theorem~\ref{thm:MAIN}, i.e., we show that the distances between neighboring agents in an equilibrium are bounded both above and below.   
In subsection~\ref{ssec:sct}, we introduce a class of trajectories generated by system~\eqref{MODEL}, termed {\it self-clustering trajectories}. 
Roughly speaking, a {\it self-clustering trajectory} is such that the agents in the configuration evolve along time to  form disjoint clusters, with the intra- and inter-cluster distances, bounded above and below, respectively, by certain prescribed thresholds.  We show that any self-clustering trajectory remains bounded if the interactions between agents in different clusters  are all attractions. 
In subsection~\ref{ssec:sotgf}, we prove  the convergence of system~\eqref{MODEL}. The proof is carried out by contradiction: we show that if there were an unbounded trajectory generated by system~\eqref{MODEL}, then it would be a self-clustering trajectory, and moreover, the interactions between agents in different clusters are all attractions after a finite amount of time. Then, by appealing to the results derived in subsection~\ref{ssec:sct}, we conclude that any such trajectory is bounded which is a contradiction.

\subsection{Bounded sizes of equilibria}\label{sec:BSE}

In this subsection, we show that there exists positive numbers  $D_+$ and $D_-$ such that if $p$ is an equilibrium of system~\eqref{MODEL}, then
\begin{equation}\label{D-dpD+}
D_- \le d_-(p) \le d_+(p) \le D_+.
\end{equation}



\vspace{3pt}
\noindent
{\bf Existence of an upper bound}. 
Recall that $\alpha_+$ (defined in~\eqref{eq:defx+}) is such that 
\begin{equation*}
g_{ij}(d) > 0, \hspace{10pt} \forall \, d \ge \alpha_+ \mbox{ and } \forall\, (v_i,v_j)\in E;  
\end{equation*}
we then set 
\begin{equation}\label{eq:eqfirstdefD+}
D_+ := (N - 1)\,\alpha_+.
\end{equation}  
We show that if $p$ is an equilibrium of system~\eqref{MODEL}, then $d_+(p) \le D_+$. 
The proof is carried out by contradiction. 

Let $p$ be an equilibrium with $d_+(p) > (N - 1)\alpha_+$. Without loss of generality, we assume that $\|x_N - x_1\| = d_+(p)$.  Let $x^j_i$, for $j =1,\ldots, n$,  be the $j$-th coordinate of $x_i$; by rotating and/or translating $p$ if necessary, we assume that both $x_1$ and $x_N$ are on the first-coordinate, with $x^1_1 < x^1_N$. In other words, we have 
\begin{equation}\label{eq:condition1Nov1}
x^1_N - x^1_1 = d_+(p) > (N - 1) \alpha_+ .
\end{equation}
Since there are only $N$ agents,  there must be a partition 
$
V = V' \sqcup V''
$, with $v_1\in V'$ and  $v_N\in V''$, such that 
$$
x^1_{j} - x^1_{i}   > \alpha_+, \hspace{10pt} \forall \, v_i\in V' \mbox{ and } \forall\, v_j\in V''.
$$  
Indeed, if such bi-partition does not exist, then there is a chain $v_{i_1}, \ldots,v_{i_N}$, with $v_{i_1} = v_1$ and $v_{i_N} = v_N$, such that 
$
x^1_{i_{j+1}} - x^1_{i_j} \le \alpha_+
$   for all  $j = 1,\ldots, N-1$. 
But then, 
$$
x^1_N - x^1_1 = \sum^{N-1}_{j=1} \(x^1_{i_{j+1}} - x^1_{i_{j}} \)  \le (N - 1) \alpha_+,
$$
which contradicts~\eqref{eq:condition1Nov1}.  

Following the partition $V = V' \sqcup V''$, we define a subset of $E$ as follows:
$$
E^*:= \left\{  (v_i, v_j) \in E \mid v_i\in V',\, v_j\in V'' \right \},
$$
which is nonempty because $G$ is connected. We further define two variables as follows: 
$$
s'(p):= \sum_{v_i \in V'} x^1_i \hspace{10pt} \mbox{ and } \hspace{10pt} s''(p):= \sum_{v_i \in V''} x^1_i. 
$$  
The dynamics of $s'(p)$ and $s''(p)$ are given by
$$
\frac{d}{dt} s'(p) = -\frac{d}{dt} s''(p)  = \sum_{(v_i,v_j) \in E^*} g_{ij}(d_{ij}) (x^1_j - x^1_i ). 
$$
Note that for each $(v_i,v_j)\in E^*$, we have  
$
x^1_j - x^1_i > \alpha_+ 
$, and hence $g_{ij}(d_{ij}) > 0$. So,   
$$
\frac{d}{dt} s'(p) = -\frac{d}{dt} s''(p) > 0,
$$
which contradicts the fact that $p$ is an equilibrium of system~\eqref{MODEL}. We have thus shown that $D_+$, defined in~\eqref{eq:eqfirstdefD+}, is  an upper bound for $d_+(p)$ for $p$ an equilibrium. 

\vspace{3pt}
\noindent
{\bf Existence of a lower bound}.  
We first have some notations. Let $(v_i,v_j)\in E$ be an edge of $G$; for any positive number $d$,  we define a function in $\operatorname{C}(\R_+, \R)$ as follows:
\begin{equation}\label{eq:defbargij}
\ol g_{ij}(d) := dg_{ij}(d).
\end{equation}
Let  $S$ be a subset of $\R$, and let $\ol g^{-1}_{ij}(S)$ be a subset of $\R_+$ defined by 
$$
\ol g_{ij}^{-1}(S) := \left\{ d \in \R_+ \mid \ol g_{ij}(d) \in S  \right\}.
$$
With the notations above, we establish the following fact:

\begin{lem}\label{lem:lem2}
Let $g_{ij}$, for $(v_i,v_j) \in E$, be an attraction/repulsion function. Then,  
$$
\lim_{\eta \to \infty} \sup \left \{d \in  \ol g^{-1}_{ij}\(\pm\, \eta \)  \right \} = 0.
$$ \,
\end{lem}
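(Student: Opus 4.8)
The plan is to exploit the fact that $\ol g_{ij}$ is genuinely large in absolute value only near the origin, so that the preimage $\ol g^{-1}_{ij}(\pm\eta)$ must collapse toward $0$ as $\eta\to\infty$. First I would fix an arbitrary $\epsilon>0$ and observe that $\ol g_{ij}(d)=d\,g_{ij}(d)$ is continuous on $[\epsilon,\infty)$, since $g_{ij}\in\operatorname{C}(\R_+,\R)$, and that the \emph{fading attraction} hypothesis gives $\lim_{d\to\infty}\ol g_{ij}(d)=0$. A continuous function on $[\epsilon,\infty)$ with a finite limit at infinity is bounded: choose $R>\epsilon$ with $|\ol g_{ij}(d)|\le 1$ for all $d\ge R$, and note that $\ol g_{ij}$ is bounded on the compact interval $[\epsilon,R]$; let $M=M(\epsilon)>0$ denote a common bound, so that $|\ol g_{ij}(d)|\le M$ for every $d\ge\epsilon$.

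Next, for any $\eta>M$ and any $d\in\ol g^{-1}_{ij}(\pm\eta)$ — that is, with $|\ol g_{ij}(d)|=\eta$ — we must have $d<\epsilon$, since $d\ge\epsilon$ would force $|\ol g_{ij}(d)|\le M<\eta$, a contradiction. Hence $\ol g^{-1}_{ij}(\pm\eta)\subseteq(0,\epsilon)$ whenever $\eta>M$, and therefore $\sup\{d\in\ol g^{-1}_{ij}(\pm\eta)\}\le\epsilon$ (with the convention $\sup\varnothing=0\le\epsilon$ covering the case in which no such $d$ exists). Since the supremum of any subset of $\R_+$ is nonnegative, letting $\eta\to\infty$ yields $0\le\limsup_{\eta\to\infty}\sup\{d\in\ol g^{-1}_{ij}(\pm\eta)\}\le\epsilon$, and as $\epsilon>0$ was arbitrary the limit must equal $0$.

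I do not expect a real obstacle here. The only mildly delicate points are the elementary fact that a continuous function on $[\epsilon,\infty)$ with a finite limit at infinity is bounded, and the bookkeeping when $\ol g^{-1}_{ij}(\pm\eta)$ is empty; worth noting also is that the argument uses only continuity and the \emph{fading attraction} condition $\lim_{d\to\infty} d\,g_{ij}(d)=0$, not \emph{strong repulsion}.
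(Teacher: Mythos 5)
Your proof is correct and follows essentially the same route as the paper's: both arguments hinge on the fact that continuity together with $\lim_{d\to\infty} d\,g_{ij}(d)=0$ makes $|\ol g_{ij}|$ bounded on $[\epsilon,\infty)$, so that for $\eta$ exceeding that bound the preimage $\ol g^{-1}_{ij}(\pm\eta)$ is forced into $(0,\epsilon)$. The only difference is that the paper additionally invokes the \emph{strong repulsion} condition $\lim_{d\to 0+}\ol g_{ij}(d)=-\infty$ to guarantee that the preimage is \emph{nonempty} for large $\eta$ (so the supremum is taken over a nonempty set), whereas you cover the possible empty case by the convention $\sup\varnothing=0$; this is a cosmetic rather than substantive divergence.
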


\begin{proof}
It suffices to show that for any $d\in \R_+$, there exists a positive number $\eta_d$ such that if $\eta > \eta_d$, then $\ol g^{-1}_{ij}\(\pm \, \eta\)$ is a nonempty subset of the open interval $(0,d)$.  
First, from the condition of {\it fading attraction}, we have 
$$
 \sup \left \{|\ol g_{ij}(d')| \mid d' \ge d  \right \} < \infty. 
$$
We can thus define 
$$
\eta_d :=  \sup \left \{|\ol g_{ij}(d')| \mid d' \ge  d\right \} + 1.
$$
Then,  by the fact that $\lim_{d \to 0+} \ol g_{ij}(d) = -\infty$,   
we conclude that if $\eta > \eta_d$, then $\ol g^{-1}_{ij}\(\pm \, \eta \)$ is nonempty, and is contained in $(0,d)$.  
\end{proof}

Let $d$ be a positive number;  we define a subset of $P_G$ as follows: 
\begin{equation}\label{eq:defZGd}
Z_G(d) := \left \{ p\in P_G\mid d_-(p) = d \right \}.
\end{equation}
Recall that $f(p)$ is the vector field of system~\eqref{MODEL} at $p$. With Lemma~\ref{lem:lem2}, we establish the following fact:

\begin{pro}\label{pro:nmvecfld}
Let $Z_G(d)$ be defined in~\eqref{eq:defZGd}. Then, 
$$
 \lim_{d\to 0+} \inf \left \{ \|f(p)\| \mid p\in Z_G(d) \right \} = \infty. 
$$\, 
\end{pro}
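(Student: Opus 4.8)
The plan is to show that as $d_-(p) \to 0^+$, the agent-pair achieving the minimum distance experiences an arbitrarily large repulsive force, and that this force cannot be cancelled by the contributions from the other neighbors of either agent, because those contributions are bounded. Let $p \in Z_G(d)$, so there is an edge $(v_i, v_j) \in E$ with $\|x_j - x_i\| = d = d_-(p)$. The force on agent $x_i$ is $f_i(p) = \sum_{v_k \in V_i} g_{ik}(d_{ik})(x_k - x_i)$, and similarly for $x_j$. The term in $f_i(p)$ coming from the edge $(v_i, v_j)$ has magnitude $|g_{ij}(d)| \, d = |\ol g_{ij}(d)|$, which tends to $\infty$ as $d \to 0^+$ by the strong repulsion condition.

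First I would isolate the dominant term: write $f_i(p) = g_{ij}(d)(x_j - x_i) + \sum_{v_k \in V_i, k \ne j} g_{ik}(d_{ik})(x_k - x_i)$. For each remaining edge $(v_i, v_k)$, we have $d_{ik} \ge d_-(p) = d$, so $|g_{ik}(d_{ik}) \, d_{ik}| = |\ol g_{ik}(d_{ik})| \le \sup\{|\ol g_{ik}(d')| \mid d' \ge d\}$, which by the fading attraction condition (as in the proof of Lemma~\ref{lem:lem2}) is finite, and in fact is bounded uniformly for $d$ bounded away from $0$ and — crucially — stays bounded as $d \to 0^+$ because $\lim_{d'\to\infty} \ol g_{ik}(d') = 0$ and $\ol g_{ik}$ is continuous on $\R_+$, so $\sup\{|\ol g_{ik}(d')| \mid d' \ge d\}$ is a nonincreasing function of $d$ that converges to a finite limit as $d \to 0^+$. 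Hence $\|g_{ik}(d_{ik})(x_k - x_i)\| = |\ol g_{ik}(d_{ik})| \le M_{ik}$ for a constant $M_{ik}$ independent of $p \in Z_G(d)$ for all small $d$. Setting $M := \max_{(v_a,v_b)\in E}\, \sup\{|\ol g_{ab}(d')| \mid d' \ge d_0\}$ for a fixed small $d_0$ and restricting to $d < d_0$, we get $\|f_i(p) - g_{ij}(d)(x_j - x_i)\| \le (N-1) M$, uniformly over $p \in Z_G(d)$.

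Therefore $\|f_i(p)\| \ge |\ol g_{ij}(d)| - (N-1)M$, and $\|f(p)\| \ge \|f_i(p)\| \ge \min_{(v_a,v_b)\in E} |\ol g_{ab}(d)| - (N-1)M$. Since each $\ol g_{ab}(d) \to -\infty$ as $d \to 0^+$ by strong repulsion, $\min_{(v_a,v_b)\in E} |\ol g_{ab}(d)| \to \infty$, and the right-hand side tends to $\infty$, which gives the claim after taking the infimum over $p \in Z_G(d)$ and then the limit $d \to 0^+$.

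The main obstacle is the uniformity: one must be careful that the bound $M$ on the "other" terms does not itself blow up as $d \to 0^+$. This is where the monotonicity of $d' \mapsto \sup\{|\ol g_{ab}(d')| \mid d' \ge d\}$ together with the fact that $\ol g_{ab}$ extends continuously with a finite (zero) limit at $+\infty$ — so its supremum over $[d, \infty)$ is finite and nonincreasing in $d$ — is essential; this is precisely the content already extracted in Lemma~\ref{lem:lem2}, which I would invoke to keep the argument short. A minor point to handle cleanly: the edge $(v_i,v_j)$ realizing $d_-(p)$ may vary with $p$, but since there are finitely many edges, taking the minimum of $|\ol g_{ab}(d)|$ over all edges handles this uniformly.
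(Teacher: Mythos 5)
There is a genuine gap, and it sits exactly at the step you flag as ``the main obstacle.'' Your bound on the non-dominant terms is not valid: for a neighbor $v_k\in V_i$ with $k\ne j$, the only constraint is $d_{ik}\ge d_-(p)=d$, so once you let $d\to 0^+$ these distances can themselves be as small as $d$, and $|\ol g_{ik}(d_{ik})|$ is then \emph{not} controlled by $\sup\{|\ol g_{ik}(d')|\mid d'\ge d_0\}$ for any fixed $d_0>0$. Moreover, your claim that $\sup\{|\ol g_{ik}(d')|\mid d'\ge d\}$ ``converges to a finite limit as $d\to 0^+$'' is false: as $d$ decreases the supremum is taken over a larger set that includes $d'$ arbitrarily close to $0$, where $\ol g_{ik}(d')\to-\infty$ by strong repulsion, so this supremum diverges. (Fading attraction only controls the behavior as $d'\to\infty$, which is the opposite end of the interval; Lemma~\ref{lem:lem2} says something different, namely that large values of $|\ol g_{ij}|$ can only occur at small distances.) The consequence is that your inequality $\|f_i(p)\|\ge|\ol g_{ij}(d)|-(N-1)M$ fails: the large repulsion from $x_j$ can be exactly cancelled by equally large repulsions from other close neighbors. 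A concrete counterexample to the single-agent lower bound: take $v_j,v_i,v_k$ with edges $(v_i,v_j),(v_i,v_k)$, place them collinearly with $x_i$ the midpoint and $d_{ij}=d_{ik}=d$, and take $g_{ij}=g_{ik}$; then $f_i(p)=0$ no matter how small $d$ is.

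This cancellation is precisely why the paper does not argue on a single agent. Its proof goes by induction on $|V|$ combined with a chain argument: the induction hypothesis produces an agent $x_i$ in a sub-framework with a large restricted force $f'_i(p')$; if the full force $f_i(p)$ is nevertheless small, the missing edge $(v_i,v_k)$ must contribute a large cancelling term, which (via Lemma~\ref{lem:lem2}) forces $d_{ik}$ to be tiny and $x^1_k>x^1_i$; then $x_k$ in turn needs another very close neighbor to cancel \emph{its} force, producing a strictly increasing sequence $x^1_i<x^1_k<x^1_j<\cdots$ that cannot terminate, contradicting finiteness of $V$. Your conclusion is correct, but the direct ``dominant term plus bounded remainder'' decomposition cannot be repaired without some such global argument that tracks where the unbalanced repulsion ends up.
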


\begin{proof}
The proof will be carried out by induction on the number of vertices of $G$. For the base case $N = 2$, we have 
$$
Z_G(d) = \left \{ (x_1, x_2) \in P_G\mid \|x_2 - x_1\| = d \right \}.
$$
We also have for any $p\in P_G$, 
$$\|f(p)\| = \sqrt{2}\, \|f_1(p)\| = \sqrt{2} \, \|f_2(p)\| = \sqrt{2}\, | \ol g_{12}(d_{12}) |.$$
From the condition of {\it strong repulsion}, we have
$
\lim_{d\to 0+}\sqrt{2}\, | \ol g_{12}(d)| = \infty
$, which establishes the base case. 

For the inductive step, we assume that Proposition~\ref{pro:nmvecfld} holds for $N \le k-1$, and prove for $N = k$. The proof will be carried out by contradiction: we assume that there exists a number $\eta > 0$ such that for any $d > 0$, there is a number $d_* \in (0, d)$ and a configuration $p\in Z_{G}\(d_*\)$ such that $\|f(p)\| \le \eta$.

Choose $d$, and hence $d^*$, arbitrarily small; let $p\in Z_G(d_*)$ be such that $ \| f(p) \| \le \eta$. Without loss of generality, we assume that $(v_1,v_2)$ is an edge of $G$, and moreover,   
$ d_{12} = d_*$.    
 Since $G$ is connected, there is a connected subgraph $G' = (V', E')$ of $G$ which has $(k - 1)$ vertices and contains the edge $(v_1,v_2)$. Label the vertices of $G$ such that $V' = \{v_1,\ldots,v_{k-1}\}$.  
Note that if we let $p'$ be the sub-configuration of $p$ associated with $G'$, then $p' \in Z_{G'}(d_*)$.  Let $S'$ be the sub-system of~\eqref{MODEL} induced by $G'$, and $f'(p')$ be the associated vector field at $p'$. From the induction hypothesis,  if we let $\omega \in \R_+$ be such that 
\begin{equation}\label{eq:defgamma}
 \| f'(p') \| = \omega\,  \eta; 
\end{equation}
then,   $\omega$ can be made arbitrarily large by assuming that $d_*$ sufficiently small.

For each $v_i \in V'$, let $f'_i(p')$ be defined by restricting $f'(p')$ to~$x_i$. From~\eqref{eq:defgamma}, there exists at least a vertex $v_i\in V'$ such that 
$$  \|f'_i(p') \| \ge \omega' \,   \eta, \hspace{10pt} \mbox{for } \omega' := \omega/ \sqrt{k - 1}.$$
Note that $\omega'$ can be made arbitrarily large by increasing $\omega$. 
Without loss of generality, we assume that $p$ is rotated in a way such that
\begin{equation}\label{eq:assumptiononf'_i}
f'_i(p) = \(\| f'_i(p) \|, 0, \ldots, 0\) \in \R^n.
\end{equation}
There are two cases: 

{\it Case I}. Suppose that $(v_i,v_k)\notin E$; then,  $f_i(p) = f'_i(p)$. In particular, if we let $\omega' > 1$, then 
$$
\|f(p)\| \ge \|f_i(p)\| = \omega'\, \eta > \eta
$$ 
which is a contradiction. The proof is then complete. 

{\it Case II}. We assume that $(v_i,v_k)\in E$. 
Recall that $x^1_i$ is the first coordinate of $x_i$. Following~\eqref{eq:assumptiononf'_i}, the dynamics of $x^1_i$, in system~\eqref{MODEL}, is given by  
$$ \dot x^1_i =\| f'_i(p) \| + g_{ik}(d_{ik}) (x^1_k - x^1_i ).$$
Since $|\dot x^1_i | \le  \|f(p)\|$,  we have
\begin{equation*}
g_{ik}(d_{ik}) (x^1_k - x^1_i ) \le  \| f(p) \| - \| f'_i(p) \|.
\end{equation*}
Using the fact that $\|f(p)\|\le \eta$ and $\|f'_i(p) \|\ge \omega'\, \eta$, we obtain
\begin{equation}\label{eq:ineqforgik}
g_{ik}(d_{ik}) (x^1_k - x^1_i )  \le - (\omega' -1)\, \eta < 0,  
\end{equation}
which further implies that  
$ |\ol g_{ik}(d_{ik}) |  \ge (\omega' - 1)\,\eta$. 
Then, from Lemma~\ref{lem:lem2},  we have that $d_{ik}$ can be made arbitrarily small by increasing $\omega'$.  
In particular, if $d_{ik}$ is sufficiently small such that $g_{ik}(d_{ik}) < 0$, then, from~\eqref{eq:ineqforgik}, we have $x^1_k > x^1_i$. 

We next consider the dynamics of $x^1_k$ in system~\eqref{MODEL}: 
$$
\dot x^1_k  = g_{ik}(d_{ik})( x^1_i - x^1_k ) + \sum_{v_j\in V_k - \{v_i\}} g_{jk}(d_{jk})(x^1_j - x^1_k ).
$$ 
Combining~\eqref{eq:ineqforgik} with the fact that $|\dot x^1_k |  \le \eta$, we know that there is at least a vertex $v_j \in V_k - \{v_i\}$ such that 
\begin{equation}\label{eq:3:16pmNov1}
g_{jk}(d_{jk})( x^1_j - x^1_k )  \le -\omega'' \, \eta, \hspace{10pt} \mbox{for } \omega'' := \frac{\omega' - 2 } {k - 1}.
\end{equation}
The right hand side of the equation can be made negative by increasing $\omega'$, and hence $\omega''$. 
Appealing again to Lemma~\ref{lem:lem2}, we know that by increasing $\omega''$, we can make $d_{jk}$ arbitrarily small such that $g_{jk}(d_{jk}) < 0$. It then follows from~\eqref{eq:3:16pmNov1}  that    
$
x^1_j >  x^1_k$. 

Now, for the dynamics of $x^1_j$, we can apply the arguments above as to the dynamics of $x^1_k$. By doing so, we obtain another vertex $v_{j'}\in V_j$ such that $x^1_{j'} > x^1_{j}$. 
Furthermore, by repeating using the arguments, we obtain an infinite sequence as follows:
$$ x^1_i < x^1_k < x^1_j < x^1_{j'} < x^1_{j''} < \cdots. $$
This contradicts the fact that  $G$ has only $k$ vertices, which completes the proof.   
\end{proof}

The existence of a lower bound $D_-$ then directly follows from Proposition~\ref{pro:nmvecfld}; indeed, from Proposition~\ref{pro:nmvecfld}, we can choose $D_-$ to be such that if 
$d\le D_-$ and $p\in Z_G(d)$, then $\|f(p)\| > 1$. 
We have thus established the first part of Theorem~\ref{thm:MAIN}.

\subsection{Self-clustering trajectories}\label{ssec:sct}
We introduce in this subsection {\it self-clustering trajectories} of system~\eqref{MODEL}. Let $(G,p)$ be a framework, and $\sigma = \{(G_i,p_i)\}^m_{i=1}$ be a partition of $(G,p)$. Recall that the intra- and inter-cluster distances of the partition $\sigma$ are defined (in~\eqref{eq:intradistance} and~\eqref{eq:interdistance}, respectively) as follows:
\begin{equation*}
\left\{
\begin{array}{l}
\cal{L}_-(\sigma)= \max\left \{ \phi(p_i) \mid 1\le i \le m  \right\}, \\
\cal{L}_+(\sigma) = \min_{(i,j)} \left \{ d(p_i,p_j) \right \}
\end{array}
\right. 
\end{equation*} 
where the minimum is taken over pairs $(i,j)$, for $(G_i,p_i)$ and $(G_j,p_j)$ adjacent.  
We then have the following definition:

\begin{Definition}\label{def:selfclustering}
Let $l_0$ and $l_1$ be positive numbers. A trajectory $p(t)$ of system~\eqref{MODEL} is {\bf self-clustering}, with respect to $(l_0,l_1)$, if there exists a nontrivial partition $V = \sqcup^{m}_{i =1} V_i$ such that the following condition is satisfied: Let $G_i$ be the subgraph induced by $V_i$, and let $\sigma_t = \{(G_i, p_i(t))\}^m_{i = 1}$ be a partition of $(G,p(t))$.  Then, there exists an instant $t_0\ge 0$ such that for all $t \ge t_0$, we have
\begin{equation}\label{eq:defselfclustering}
\cal{L}_-(\sigma_t) < l_0 \hspace{10pt} \mbox{ and } \hspace{10pt} \cal{L}_+(\sigma_t) > l_1. 
\end{equation} \,

\end{Definition} 

Recall that  the number $\alpha_+$ (defined in~\eqref{eq:defx+}) is chosen such that 
$
g_{ij}(d) > 0   
$ for all $d \ge \alpha_+$  and for all $(v_i,v_j)\in E$. 
We prove in this subsection that if a trajectory $p(t)$ is self-clustering, with inter-cluster distances sufficiently large (greater than $\alpha_+$); then, $p(t)$ remains bounded along time~$t$. Precisely, we have  the following fact:

\begin{pro}\label{pro:selfclustering}
Let $l_0$ and $l_1$ be positive numbers, and assume that $l_1 > \alpha_+$. Suppose that $p(t)$ is a self-clustering trajectory with respect to $(l_0, l_1)$; then, $p(t)$ remains bounded along the evolution, i.e., 
$$
\sup\{\phi(p(t)) \mid t \ge 0\} < \infty.
$$ \,   
\end{pro}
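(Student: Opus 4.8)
The plan is to bound the growth of each cluster separately and then control the positions of the cluster "centroids." Fix the self-clustering partition $V = \sqcup_{i=1}^m V_i$ and the time $t_0$ after which $\mathcal{L}_-(\sigma_t) < l_0$ and $\mathcal{L}_+(\sigma_t) > l_1 > \alpha_+$. Since $\mathcal{L}_-(\sigma_t) < l_0$ for all $t \ge t_0$, each sub-configuration $p_i(t)$ already has diameter bounded by $l_0$; the issue is not the internal spread of a cluster but the possibility that the clusters drift apart, making $\phi(p(t))$ blow up. So it suffices to show that for each pair of adjacent clusters $(G_i,p_i)$ and $(G_j,p_j)$ the inter-cluster distance $d(p_i(t),p_j(t))$ stays bounded above; combined with connectedness of $G$ and the uniform bound $l_0$ on cluster diameters, this bounds $\phi(p(t))$.

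The key step is to introduce, for each cluster index $i$, the cluster sum $\bar x_i(t) := \sum_{v_k \in V_i} x_k(t)$ (equivalently $|V_i|$ times the centroid). Summing the equations of motion over $v_k \in V_i$, the intra-cluster interaction terms cancel in reciprocal pairs, leaving
\begin{equation*}
\frac{d}{dt}\bar x_i = \sum_{\substack{(v_k,v_l)\in E \\ v_k\in V_i,\ v_l \notin V_i}} g_{kl}(d_{kl})\,(x_l - x_k).
\end{equation*}
For $t \ge t_0$ every such cross-cluster edge has $d_{kl} \ge \mathcal{L}_+(\sigma_t) > l_1 > \alpha_+$, so $g_{kl}(d_{kl}) > 0$: every inter-cluster interaction is an attraction. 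The plan is then to build a Lyapunov-type quantity from the potential restricted to cross-cluster edges, or more directly to argue as follows. Consider the diameter functional $D(t) := \max_{k,l}\|x_k(t)-x_l(t)\|$ of the whole formation, attained along some edge-path; because intra-cluster pieces contribute at most $l_0$ each and the clusters form a tree-like adjacency structure of at most $m$ nodes, $D(t)$ is comparable (up to an additive constant depending on $l_0$ and $m$, and a multiplicative factor $\le m$) to the largest inter-cluster separation. One then shows this largest separation cannot grow: whenever two adjacent clusters are far apart, the net force on the "outermost" cluster, projected onto the separating direction, is strictly inward, because all relevant cross-cluster terms are attractive (sign-definite), exactly as in the argument used in Subsection~\ref{sec:BSE} to bound $d_+$ of equilibria via the bipartition $V = V' \sqcup V''$.

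I expect the main obstacle to be making the "outermost cluster feels a net inward pull" argument rigorous when $D(t)$ is not differentiable (it is a max of smooth functions) and when the cluster partition's adjacency graph is an arbitrary tree rather than a single pair. The clean way around this is: (i) pick the two agents $x_a(t), x_b(t)$ realizing the current diameter, rotate so the segment $x_a x_b$ lies along the first axis with $x_a^1 < x_b^1$; (ii) since $\|x_b - x_a\|$ exceeds $(N-1)\alpha_+$ once $\phi(p(t))$ is large enough, reproduce the bipartition $V = V' \sqcup V''$ with $v_a \in V'$, $v_b \in V''$ and $x_j^1 - x_i^1 > \alpha_+$ across the cut, so every cut edge is attractive; (iii) with $s'(t) = \sum_{v_i\in V'} x_i^1$ and $s''(t) = \sum_{v_i \in V''} x_i^1$, conclude $\frac{d}{dt}(s'' - s') < 0$, i.e. the two groups' mean first-coordinates are being pulled together. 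Turning this instantaneous statement into a uniform upper bound on $\phi(p(t))$ is then a standard comparison/contradiction argument: if $\phi(p(t_n)) \to \infty$ along some sequence, then $\phi$ must at some point increase past any threshold $\rho > (N-1)\alpha_+$ with nonnegative time-derivative, but the computation above (applied at that instant, using that $\phi = \|x_b - x_a\|$ there and $\frac{d}{dt}\|x_b-x_a\|^2 = 2\langle x_b - x_a, \dot x_b - \dot x_a\rangle \le 2(\dot x_b^1 - \dot x_a^1)$, while the cut-edge terms dominate) forces $\frac{d}{dt}\phi < 0$ there, a contradiction. The one subtlety to handle carefully is that the $l_0$-bound on intra-cluster diameters is what guarantees that, for $\rho$ large, the diameter-realizing edge is genuinely "cross-cluster at scale $\alpha_+$," so the bipartition interacts correctly with $\sigma_t$; this is exactly where the hypothesis $l_1 > \alpha_+$ (hence all cross-cluster edges attractive) is used.
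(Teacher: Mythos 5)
There is a genuine gap at the final step, and it is precisely the difficulty that the paper's proof is built to circumvent. Your step (iii) shows that the \emph{aggregate} quantity $s''-s'$ (sums of first coordinates over the two sides of the cut) has negative derivative, because every cut edge is attractive. But what you need is $\frac{d}{dt}\|x_b-x_a\|\le 0$ for the specific \emph{extremal pair} realizing $\phi(p(t))$, and the two statements are not equivalent in either direction. On one hand, the force on the individual agent $x_b$ includes the intra-cluster forces from its own cluster-mates, which at short range are repulsive and can push $x_b$ outward along the first axis; if $x_b$ happens to have no cross-cluster edge, nothing in the bipartition computation prevents $\frac{d}{dt}\phi(p(t))>0$ at that instant, so the ``first crossing of a threshold $\rho$'' contradiction does not close. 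On the other hand, $s''-s'$ itself cannot serve as a Lyapunov function for the diameter: it does not bound $\phi(p)$ from above (the two group sums can remain balanced while one agent, compensated by many, drifts arbitrarily far), and the bipartition and the distinguished direction change with $t$, so the instantaneous inequality cannot be integrated. The equilibrium argument of Subsection~\ref{sec:BSE} that you invoke is a purely static contradiction ($\dot s'=0$ at an equilibrium versus $\dot s'>0$); it does not transfer to a dynamic bound along a trajectory.

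The paper's proof resolves exactly this: it replaces extremal agents by cluster \emph{centroids}, so that the intra-cluster forces cancel identically and only the (attractive) cross-cluster forces act --- this is the correct analogue of your cluster-sum computation. But a single cluster's centroid can still be dragged \emph{away} from the origin by an adjacent cluster lying even further out, so one inequality does not suffice. The proof therefore introduces the hierarchy $\Pi(k\;t)$, the maximal centroid norm over unions of $k$ clusters, notes that $\Pi(m\;t)=\|c(p(t))\|=0$, and proves (Lemma~\ref{lem:boundedforPI}) that if $\Pi(k\;t)$ is at a running maximum then $\Pi(k+1\;t)\ge r-N(r-\Pi(k\;t))-2l_0$; iterating $m-1$ times yields $0=\Pi(m\;t_{m-1})\ge r-2\sum_{i=0}^{m-2}N^i l_0$, a contradiction for $r$ large. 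Your reduction of $\phi(p(t))$ to inter-cluster separations plus $O(ml_0)$ is fine, and your observation that all cross-cluster edges are attractive once $l_1>\alpha_+$ is the right starting point, but the passage from ``attractive cut forces'' to a uniform bound on the diameter requires this multi-level centroid argument (or something equally substantial), which your proposal is missing.
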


To prove Proposition~\ref{pro:selfclustering}, we first introduce some notations. 
Let the centroid of a configuration $p(t)$ be defined as 
$$
c(p(t)) := \sum_{v_i \in V} x_i(t)/ |V|.  
$$ 
Then, by computation, we have $d c(p(t))/dt = 0$. So, for simplicity but without loss of generality, we can assume that 
$c(p(t)) = 0$  for all $t\ge 0$. 
Now, for each $t\ge 0$, let 
\begin{equation}\label{eq:defsigmatforproposition5}
\sigma_t = \{(G_i, p_i(t))\}^m_{i = 1}, \hspace{10pt} \mbox{ with } \hspace{5pt} G_i = (V_i, E_i), 
\end{equation}  
be a nontrivial partition of $(G, p(t))$. Because $p(t)$ is a self-clustering trajectory. So, we can assume that~\eqref{eq:defselfclustering} holds for the partitions $\sigma_t$ defined in~\eqref{eq:defsigmatforproposition5} for all $t \ge t_0$. Further, for simplicity, we assume that $t_0 = 0$, i.e.,~\eqref{eq:defselfclustering} holds since the starting time.   
Let $\cal{I} := \{1,\ldots, m\}$ be the index set for the frameworks $\{(G_i,p_i(t))\}^m_{i=1}$ associated with $\sigma_t$. For a subset $\cal{I}'$ of $\cal{I}$, let $$G_{\cal{I}'} = (V_{\cal{I}'}, E_{\cal{I}'}), \hspace{10pt} \mbox{ with } \hspace{5pt} V_{\cal{I}'}: =\sqcup_{j\in \cal{I}'} V_j $$  
be the subgraph of $G$ induced by $V_{\cal{I}'}$, and let  
$
p_{\cal{I}'}(t)  
$ 
be the sub-configuration of $p(t)$ associated with $G_{\cal{I}'}$. Similarly, let the centroid of $p_{\cal{I}'}(t)$ be 
$c(p_{\cal{I}'}(t)) := \sum_{v_{i}\in V_{\cal{I}'}} x_i(t) /|V_{\cal{I}'}|$.

Next, we introduce a set of time-dependent variables, encoding certain metric properties of $p(t)$ along time~$t$.  
First, for a subset $\cal{I}'\subset\cal{I}$, let a continuously differentiable function in~$t$ be defined as 
$
\pi(\cal{I}'\; t):= \| c(p_{\cal{I}'}(t)) \|
$. 
Then, for an integer $k = 1,\ldots, m$,  we define a continuous function by 
\begin{equation}\label{eq:defPIkt}
\Pi(k\; t) := \max_{\cal{I}'}\{\pi(\cal{I}'\; t) \mid  |\cal{I}'| = k\}. 
\end{equation}
For example, if $k = 1$, then
\begin{equation*}\label{eq:Pi1thehe}
\Pi(1\; t) = \max\{\|c(p_i(t))\|  \mid i = 1,\ldots, m\}; 
\end{equation*}
and if $k = m$, then   
$$
\Pi(m\; t) = \|c(p(t))\| = 0.
$$
We  note here, without a proof,  the following fact that for any $t\ge 0$,  
\begin{equation*}\label{eq:1:30pm}
\Pi(1\; t) \ge \ldots \ge \Pi(m\; t) = 0.
\end{equation*}

Now, fix an integer $k = 1,\ldots, m-1$,  we relate below $\Pi(k,t)$ and $\Pi(k+1,t)$ by formalizing the following fact: 
if $\Pi(k\; t)$ is expanding at a certain instant~$t$, then $\Pi(k+1\; t)$ cannot be too small. Precisely, we have the following result:

\begin{lem}\label{lem:boundedforPI}
Let $p(t)$ be the self-clustering trajectory, with respect to $(l_0,l_1)$, in Proposition~\ref{pro:selfclustering}.  
Fix an instant~$t > 0$, and let $r>0$ be such that   $\Pi(1\; t') \le r$ for all $t' \le t$.   
Suppose that there is an integer $k= 1,\ldots, m-1$ such that $\Pi(k\; t') \le \Pi(k\; t)$  for all $t' \le t$;  
then,  
$$
\Pi(k+1\; t) \ge r - N (r - \Pi(k\; t) ) -  2 l_0. 
$$ \,
\end{lem}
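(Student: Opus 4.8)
The plan is to convert the running‑maximum hypothesis on $\Pi(k\; \cdot)$ into a sign condition on the time‑derivative of a single scalar, by projecting the group centroid $c(p_{\cal{I}'}(\cdot))$ onto the direction it points in at time~$t$; the reciprocity $g_{ij}=g_{ji}$ then collapses that derivative to a sum over the inter‑cluster (``boundary'') edges only, each of which is an attraction since its length exceeds $l_1>\alpha_+$. Concretely: since $N\ge 2$ and $\Pi(k\; t)\le\Pi(1\; t)\le r$, if $\Pi(k\; t)=0$ the asserted bound $r-N(r-\Pi(k\; t))-2l_0=(1-N)r-2l_0$ is negative and there is nothing to prove, so I would assume $\Pi(k\; t)>0$. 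Fix $\cal{I}'\subset\{1,\ldots,m\}$ with $|\cal{I}'|=k$ realizing the maximum, i.e.\ $\|c(p_{\cal{I}'}(t))\|=\Pi(k\; t)$, set $u:=c(p_{\cal{I}'}(t))/\Pi(k\; t)$ and $h(t'):=\langle u,\,c(p_{\cal{I}'}(t'))\rangle$, which is $C^1$ in $t'$. For every $t'\le t$ one has $h(t')\le\|c(p_{\cal{I}'}(t'))\|=\pi(\cal{I}'\; t')\le\Pi(k\; t')\le\Pi(k\; t)=h(t)$, so $h$ attains its maximum on $[0,t]$ at the endpoint $t$, whence $\dot h(t)\ge 0$.

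Next I would compute $\dot h(t)$ by differentiating $c(p_{\cal{I}'}(t))=\frac1{N'}\sum_{v_l\in V_{\cal{I}'}}x_l$ (with $N':=|V_{\cal{I}'}|$) along~\eqref{MODEL}: the terms from edges internal to $V_{\cal{I}'}$ cancel in pairs because $g_{ij}=g_{ji}$, leaving $\dot h(t)=\frac1{N'}\sum g_{ij}(d_{ij})\langle u,x_j-x_i\rangle$, the sum over edges $(v_i,v_j)\in E$ with $v_i\in V_{\cal{I}'}$, $v_j\notin V_{\cal{I}'}$. This edge set is nonempty since $G$ is connected and both $V_{\cal{I}'}$ and its complement are nonempty ($1\le k\le m-1$). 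Every such edge joins a cluster $G_a$, $a\in\cal{I}'$, to a distinct adjacent cluster $G_b$, $b\notin\cal{I}'$, so $d_{ij}\ge d(p_a,p_b)\ge\cal{L}_+(\sigma_t)>l_1>\alpha_+$ and hence $g_{ij}(d_{ij})>0$. As $\dot h(t)\ge 0$ and all weights are strictly positive, at least one boundary edge $(v_i,v_j)$ satisfies $\langle u,x_j\rangle\ge\langle u,x_i\rangle$.

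To conclude, let $a\in\cal{I}'$ and $b\notin\cal{I}'$ be the clusters of $v_i$ and $v_j$. Since every cluster has diameter $<l_0$, each agent lies within $l_0$ of its cluster centroid, giving $\langle u,c(p_b)\rangle\ge\langle u,x_j\rangle-l_0\ge\langle u,x_i\rangle-l_0\ge\langle u,c(p_a)\rangle-2l_0$. I bound $\langle u,c(p_a)\rangle$ from below using $N'\Pi(k\; t)=\langle u,\sum_{v_l\in V_{\cal{I}'}}x_l\rangle=\sum_{c\in\cal{I}'}|V_c|\,\langle u,c(p_c)\rangle$ together with $\langle u,c(p_c)\rangle\le\|c(p_c(t))\|\le\Pi(1\; t)\le r$ for each $c$; isolating the $a$‑term and using $|V_a|\ge1$, $N'\le N-1$, $r-\Pi(k\; t)\ge0$ yields $\langle u,c(p_a)\rangle\ge r-\tfrac{N'}{|V_a|}(r-\Pi(k\; t))\ge r-(N-1)(r-\Pi(k\; t))$, hence $\langle u,c(p_b)\rangle\ge r-(N-1)(r-\Pi(k\; t))-2l_0$. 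Finally enlarge $\cal{I}'$ to $\cal{I}'':=\cal{I}'\cup\{b\}$, of size $k+1$; then $\Pi(k+1\; t)\ge\|c(p_{\cal{I}''}(t))\|\ge\langle u,c(p_{\cal{I}''}(t))\rangle=\frac{N'\Pi(k\; t)+|V_b|\langle u,c(p_b)\rangle}{N'+|V_b|}$, and inserting the two lower bounds and simplifying (using $N'\le(N-1)N'$ and $|V_b|\le N'+|V_b|$) gives $\Pi(k+1\; t)\ge r-(N-1)(r-\Pi(k\; t))-2l_0\ge r-N(r-\Pi(k\; t))-2l_0$, as claimed.

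The step I expect to be the main obstacle is the lower bound on $\langle u,c(p_a)\rangle$: the naive estimate $\langle u,x_i\rangle\ge-(r+l_0)$ (valid because every agent lies within $r+l_0$ of the centroid $0$) is far too lossy, as it leaves an irreducible $O(r)$ term while $r$ is unbounded in the intended application. One must instead exploit that the $k$ cluster centroids composing $c(p_{\cal{I}'}(t))$ have $u$‑components averaging to $\Pi(k\; t)$ yet each is $\le r$, forcing $\langle u,c(p_a)\rangle$ to be within a bounded multiple of $(r-\Pi(k\; t))$ of $r$. Secondary care is needed to get the reciprocity cancellation in the centroid derivative right, and to verify that the inter‑cluster interactions indeed push $c(p_{\cal{I}'})$ in the $u$‑direction so that $\dot h(t)\ge0$ is genuinely informative.
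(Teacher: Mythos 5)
Your proposal is correct and follows essentially the same route as the paper's Appendix~A proof: pick a maximizing $\cal{I}'$, use the running-maximum hypothesis to get a nonnegative derivative of the projected centroid (the paper differentiates $\pi(\cal{I}'\;t)^2$ rather than your scalar $h$), reduce to the inter-cluster edges where all $g_{ij}>0$ since $l_1>\alpha_+$, extract one edge with nonnegative projection, pay the $2l_0$ diameter penalty, and use the weighted-average bound $\langle \hat c(p_{\cal{I}'}(t)), c(p_a(t))\rangle \ge r - N(r-\Pi(k\;t))$. Your closing step (bounding $\|c(p_{\cal{I}''}(t))\|$ below by its projection onto $u$ and noting both terms of the resulting convex combination exceed the target) is a slightly cleaner bookkeeping than the paper's chain of inner-product comparisons, but it is not a different method.
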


We refer to Appendix~A for a proof of Lemma~\ref{lem:boundedforPI}. 
 With Lemma~\ref{lem:boundedforPI}, we prove Proposition~\ref{pro:selfclustering}. 

\begin{proof}[Proof of Proposition~\ref{pro:selfclustering}]
Let $\sigma_t$ be defined in~\eqref{eq:defsigmatforproposition5} as the nontrivial partitions associated with the self-clustering trajectory $p(t)$. Let $\Pi(k, t)$, for $k = 1,\ldots, m$, be defined in~\eqref{eq:defPIkt}. We first show that  
\begin{equation}\label{eq:phiptcaonima}
 \phi(p(t)) <2(\Pi(1\; t) + l_0), \hspace{10pt} \forall\, t \ge 0.
\end{equation} 
Let $v_{i}$, $v_{j}$ be any two vertices in $V$; we assume that $v_{i} \in V_{i'}$ and $v_{j}\in V_{j'}$. Then, by the triangle inequalities, the distance $d_{ij}(t)$ between $x_i(t)$ and $x_j(t)$ is bounded above by the sum of three terms: 
$$
\begin{array}{lll}
d_{ij}(t)  & \le & \|x_{i}(t) - c(p_{i'}(t))\|  + \| c(p_{j'}(t)) - x_{j}(t) \| \\
& & +   \|  c(p_{i'}(t)) - c(p_{j'}(t)) \|;
\end{array}
$$
for the first two terms, we have
$$
\left\{
\begin{array}{l}
 \|x_{i}(t) - c(p_{i'}(t))\| < \phi(p_{i'}(t)) < l_0, \\
 \| c(p_{j'}(t)) - x_{j}(t) \| < \phi(p_{j'}(t)) < l_0; 
\end{array}
\right.
$$
for the last term, we have
$$
\| c(p_{i'}(t)) - c(p_{j'}(t))\| \le   2 \Pi(1\; t).
$$
It then follows that $d_{ij}(t) < 2(\Pi(1\; t) + l_0)$, and hence~\eqref{eq:phiptcaonima} holds. 

It thus suffices to show that $\sup \{\Pi(1, t)\mid t \ge 0 \} < \infty$. The proof is carried out by contradiction. Suppose that, to the contrary, for any $r \ge 0$, there exists an instant $t_1$ such that $\Pi(1,t_1) = r$. Choose $r$ sufficiently large such that $r > \Pi(1\; 0)$, and let $t_1$ be such that 
$$\Pi(1\; t)  \le \Pi(1\; t_1) = r, \hspace{10pt} \forall\,  t \le t_1.$$  
Then, from Lemma~\ref{lem:boundedforPI}, we have 
$$
\Pi(2\; t_1) \ge r -  2 l_0. 
$$
We may increase $r$, if necessary, so that $r - 2 l_0 > \Pi(2\; 0)$. Choose an instant~$t_2 \in (0, t_1]$ such that     
$$
\Pi(2\; t)  \le \Pi(2\; t_2) = r -  2 l_0, \hspace{10pt} \forall\,  t \le t_2.
$$
Then, appealing again to Lemma~\ref{lem:boundedforPI}, we obtain
$$
\Pi(3\; t_2) \ge r -  2(N + 1) l_0.
$$
Repeating this argument, we then obtain a time sequence  
$
t_1 \ge  \ldots \ge t_{m-1}
$ 
such that for all $k = 1,\ldots, m-1$, we have
$$
\Pi(k+1\; t_k) \ge r - 2 \sum^{k-1}_{i = 0} N^i  l_0.
$$
In particular, for $k= m$, we have
\begin{equation}\label{eq:8:24pm}
0 = \Pi(m\; t_{m-1}) \ge r - 2 \sum^{m - 2}_{i = 0} N^i  l_0,
\end{equation}
which is a contradiction because  $r$ can be chosen arbitrarily large, and hence the right hand side of~\eqref{eq:8:24pm} is positive. This completes the proof.
\end{proof}

\subsection{Convergence of the gradient flow}\label{ssec:sotgf}
We now return to the proof of Theorem~\ref{thm:MAIN}. We show that for any initial condition $p(0)$ in $P_G$, the trajectory $p(t)$ converges to the set of equilibria. 
The proof will be carried out by contradiction, i.e., we assume that there is an initial condition $q(0)\in P_G$ such that the trajectory $q(t)$ of system~\eqref{MODEL} is unbounded. In the remainder of the section, we fix the trajectory $q(t)$, and derive contradictions.  


Since $q(t)$ is unbounded, there is a time sequence $\{t_i\}_{i\in N}$, with $\lim_{i\to \infty} t_i = \infty$, such that $\{q(t_i)\}_{i\in\N}$ is unbounded. 
Choose a monotonically increasing sequence $\{l_i\}_{i\in \N}$ out of $\R_+$, and let $\lim_{i\to \infty} l_i = \infty$.  
From Theorem~\ref{CCOSODC}, there is a diluting sequence  $\{q(t_{n_i})\}_{i\in \N}$, as a subsequence of $\{q(t_i)\}_{i\in \N}$, together with a sequence of nontrivial partitions $\{\sigma_{t_i}\}_{i\in\N}$ satisfying the following properties:
\begin{enumerate}
\item All partitions $\sigma_{t_i}$, for $i\in \N$, induce the same partition of $V$: 
$
V = \sqcup^m_{i=1} V_i
$.
\item There exists $L_0 > 0$ such that
$
\cal{L}_-(\sigma_i) \le L_0
$ for all $ i\in \N$.
\end{enumerate}  \,
Without loss of generality, we assume that $n_i = i$ for all $i\in\N$, i.e., the subsequence $\{q(t_{n_i})\}_{i\in\N}$ can be chosen as $\{q(t_i)\}_{i\in\N}$ itself. We can also assume that  $L_0$ is large enough so that $L_0 \ge D_+$, with $D_+ = (N - 1) \, \alpha_+$ defined in~\eqref{eq:eqfirstdefD+}.  

Following the partition $V = \sqcup^m_{i=1} V_i$, we let $G_i = (V_i, E_i)$ be the subgraph induced by $V_i$. For each framework $(G,q(t))$, we let $\sigma_t$ be the nontrivial partition of $(G,q(t))$ defined as
\begin{equation}\label{eq:defsigmat}
\sigma_t:= \{(G_i, q_i(t))\}^m_{i = 1};  
\end{equation}
note that for each $i \in \N$, we have  $\sigma_{t_i} \in \Sigma(l_i\; q(t_i))$. 
We show below that if $q(t)$ is unbounded, then it {\it has to} be a {\it self-clustering} trajectory with respect to $(L_0,l_i)$ for all $i\in \N$. 
Precisely, we state the following result:

\begin{pro}\label{pro:relatepartitiontodynamicalsystem}
Suppose that $q(t)$ were an unbounded trajectory generated by system~\eqref{MODEL};   
let $\sigma_{t}$, for $t\ge 0$, be the partition of $(G,q(t))$ defined in~\eqref{eq:defsigmat}. 
Then, for each $i\in \N$, there would be a $j_i\in \N$ such that for all $t \ge t_{j_i}$, we have
\begin{equation*}\label{eq:contradictioncondition1}
\sigma_t\in \Sigma(l_i\; q(t)) \hspace{5pt} \mbox{ and } \hspace{5pt}  \cal{L}_-(\sigma_t) <  L_0. 
\end{equation*}
In particular, $q(t)$ would be a self-clustering trajectory with respect to $(L_0, l_i)$ for all $i\in \N$. 
\end{pro}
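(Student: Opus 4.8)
The plan is to strip off the combinatorial bookkeeping first and then isolate the single dynamical fact that does the real work. Fix the partition $V=\sqcup_{i=1}^m V_i$ and the induced subgraphs $G_i$ supplied by Theorem~\ref{CCOSODC}; since each $G_i$ is connected at every time, $\sigma_t\in\Sigma(l\; q(t))$ amounts to the single requirement that every adjacent pair $(G_i,q_i(t)),(G_j,q_j(t))$ satisfy $d(q_i(t),q_j(t))>l$ together with $\max\{\phi(q_i(t)),\phi(q_j(t))\}<d(q_i(t),q_j(t))$. Hence it suffices to prove, for every $R>0$, that there is a time $T$ past which $\mathcal{L}_-(\sigma_t)<L_0$ and $d(q_i(t),q_j(t))>R$ for every adjacent pair: taking $R:=\max\{l_i,L_0\}$ and any $j_i$ with $t_{j_i}\ge T$ then yields both $\mathcal{L}_-(\sigma_t)<L_0$ and $\sigma_t\in\Sigma(l_i\; q(t))$ for all $t\ge t_{j_i}$ — the latter because $\phi(q_i(t))\le\mathcal{L}_-(\sigma_t)<L_0\le R<d(q_i(t),q_j(t))$ and $d(q_i(t),q_j(t))>R\ge l_i$ — which is exactly the self-clustering property with respect to $(L_0,l_i)$. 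So the proposition reduces to two claims, for every $R>0$: (i) $\mathcal{L}_-(\sigma_t)<L_0$ eventually, and (ii) every adjacent separation exceeds $R$ eventually.

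Both claims run on the same mechanism. Because the partition is nontrivial, the net force that the complement of a cluster exerts on that cluster is carried only by inter-cluster edges, and by the fading-attraction condition $\lim_{d\to\infty}dg_{ij}(d)=0$ this net force has norm at most $\varepsilon(\rho)$, with $\varepsilon(\rho)\to0$ as $\rho\to\infty$, on any time interval where all inter-cluster distances exceed $\rho$; so, once the clusters are well separated, each cluster evolves like the sub-system induced by $G_i$ — a graph on fewer than $N$ vertices — under a controllably small perturbation. For claim~(i) I would run a first-crossing argument in the spirit of the upper-bound proof of Subsection~\ref{sec:BSE}, using that $L_0\ge D_+=(N-1)\alpha_+$. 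Suppose some cluster diameter first reached $L_0$ at a time $\tau$ lying past the time after which all inter-cluster distances exceed a large $R$. At $\tau$ that cluster has diameter exactly $L_0$; split its vertex set along the diameter axis into two parts whose connecting intra-cluster edges all have axis-separation $>\alpha_+$ (possible since the cluster is connected and has fewer than $N$ vertices, so such a chain of edges would otherwise span less than $(N-1)\alpha_+\le L_0$). The intra-cluster pulling term along that axis is then strictly positive and, the diameter being only $L_0$, is bounded below by a positive constant depending only on $L_0$ and the $g_{ij}$'s; choosing $R$ so that $\varepsilon(R)$ is dominated by that constant forces the axial spread of the two parts — hence the cluster diameter near $\tau$ — to be strictly decreasing at $\tau$, contradicting that $\tau$ is a first crossing from below. (Alternatively, one may invoke the inductive hypothesis of Theorem~\ref{thm:MAIN} on the cluster sub-systems, whose equilibria have diameter at most $D_+<L_0$, to bound the unperturbed cluster dynamics and then absorb the small perturbation.) The resulting contradiction is with the diluting-sequence property $\mathcal{L}_-(\sigma_{t_k})\le L_0$, which holds for the infinitely many $t_k\to\infty$.

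Claim~(ii) is where the argument is delicate, and it is the main obstacle. Since $\Psi$ is bounded below (Lemma~\ref{lem:phiboundedbelow}) and non-increasing along $q(t)$, one has $\int_0^\infty\|f(q(t))\|^2\,dt<\infty$; and since the internal forces of a cluster sum to zero, the external force on cluster $C_a$ equals $\sum_{v_{i'}\in V_a}f_{i'}(q(t))$, of norm at most $\sqrt{|V_a|}\,\|f(q(t))\|$, whence $\int_0^\infty\|\frac{d}{dt}c(q_i(t))\|^2\,dt<\infty$ for every $i$. At the times $t_k$ every adjacent separation exceeds $l_k\to\infty$; if (ii) failed, some adjacent separation would return to a fixed level $R$ at times $s_k\to\infty$, so a fixed adjacent pair would have its separation swing between a level $\le R$ and a level $\ge l_k$ infinitely often, making the total variation of that separation infinite. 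On each downswing the separation stays above $R$, so the relevant cross-edge interactions are attractions whose magnitudes are controlled by the current separation; bounding the variation over such an interval by the $L^2$-norm of $\frac{d}{dt}(c(q_i)-c(q_j))$ times the square root of the elapsed time, and trading the latter against the decay rate of $dg_{ij}$, should contradict $\int_0^\infty\|f\|^2<\infty$. The point that has to be forced through is making this control \emph{uniform in time} rather than merely along the sampled subsequence $\{t_k\}$: the fading interactions cut both ways here, since they make the perturbation in claim~(i) negligible but also let a stray cluster drift back toward its neighbors so slowly that only the energy estimate — supplemented, where Definition~\ref{def:fadingattraction} is not quantitatively strong enough on its own, by the sharper decay of $dg_{ij}(d)$ available for the interaction functions at hand — rules the return out. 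Finally, (i) and (ii) have to be run as a single bootstrap, since the force bound $\varepsilon(R)$ feeding (i) is legitimate only on intervals where (ii) already holds; I would organize the whole argument, together with the base case $N=2$, as an induction on~$N$.
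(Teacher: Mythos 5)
Your reduction of the proposition to the two eventual statements --- $\cal{L}_-(\sigma_t)<L_0$ eventually, and every adjacent inter-cluster separation eventually exceeding $\max\{l_i,L_0\}$ --- matches the structure of the paper's proof (the indices $j_i'$ and $j_i''$ there). But the mechanism you propose for establishing either statement does not close, and the single idea that makes the paper's proof work is absent from your proposal. The paper's engine is the notion of a \emph{dissipation zone}: Proposition~\ref{pro:defnud} shows that $\mu(d)=\inf\{\|f(p)\|: p\in X_G(d)\}$ is continuous and strictly positive for $d\ge D_+$; Lemma~\ref{lem:boundedbelowdistance} shows the level sets $X_{G,ij}(d)$ and $X_{G,ij}(d\pm\delta)$ are at distance $\delta/\sqrt2$; and Lemma~\ref{lem:upperboundonvelocity} bounds the speed $\|f(q(t))\|$ by some $\xi$. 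Together these force the trajectory to linger near any level set $X_{G,ij}(L)$ with $L>D_+$ for at least $\tau=\delta/(\sqrt2\,\xi)$ units of time each visit, during which $\Psi$ drops by at least $\mu(L)^2\tau/4$. Since the total drop of $\Psi$ is finite, such visits stop after a finite time; combined with the sampled times $t_k$ on the ``good side'' of the level set, this confines the trajectory permanently. Both eventual statements follow from this one quantized-energy-drop argument, and no control of cluster motion is ever needed.

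Your substitutes for this have concrete gaps. For claim~(i), the bipartition argument from Subsection~\ref{sec:BSE} gives $\frac{d}{dt}\sum_{v_i\in V'}x_i^1>0$, i.e.\ it controls the \emph{sum} of first coordinates over the two halves, not the distance between the two extremal agents; at an equilibrium any nonzero derivative is already a contradiction, but at a first-crossing time you would need the \emph{diameter} itself to be strictly decreasing, and that does not follow (one half's centroid can advance while its extremal agent retreats). For claim~(ii), you concede the argument is not uniform in time and appeal to ``sharper decay of $dg_{ij}(d)$ available for the interaction functions at hand''; Definition~\ref{def:fadingattraction} only provides $\lim_{d\to\infty}dg_{ij}(d)=0$ with no rate, so there is no such sharper decay to invoke, and the Cauchy--Schwarz estimate (variation $\le$ $L^2$-norm times $\sqrt{\text{elapsed time}}$) gives nothing as the elapsed time grows. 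The finiteness of $\int_0^\infty\|f\|^2\,dt$ does rule out the return of a separated cluster, but only after it is converted into the per-crossing energy quantum via Proposition~\ref{pro:defnud} and Lemma~\ref{lem:boundedbelowdistance}; that conversion is the missing step.
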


  
We refer to Appendix~B for a proof of Proposition~\ref{pro:relatepartitiontodynamicalsystem}. With Propositions~\ref{pro:selfclustering} and~\ref{pro:relatepartitiontodynamicalsystem}, we prove the second part of Theorem~\ref{thm:MAIN}.

\begin{proof}[Proof of the second part of Theorem~\ref{thm:MAIN}]
The proof is carried out by contradiction; we assume that there exists an unbounded trajectory $q(t)$ of system~\eqref{MODEL}.  But then, 
by combining Propositions~\ref{pro:selfclustering} and~\ref{pro:relatepartitiontodynamicalsystem}, we derive a contradiction:  First, choose an~$i\in \N$ such that $l_i > \alpha_+$, then from Proposition~\ref{pro:relatepartitiontodynamicalsystem}, $q(t)$ is a self-clustering trajectory with respect to $(L_0, l_i)$; On the other hand, from Proposition~\ref{pro:selfclustering}, we have
$
\sup_{t \ge 0} \phi(q(t))  < \infty  
$,  
which contradicts the assumption that $q(t)$ is unbounded. 
We thus conclude that for any initial condition $p(0)\in P_G$, the trajectory $p(t)$ is bounded, and hence converges to the set of equilibria. This completes the proof. 
\end{proof}





\section{Conclusions}
We have established in this paper the convergence of the multi-agent system~\eqref{MODEL} under the assumption that the interaction functions $g_{ij}$, for $(v_i,v_j) \in E$, have fading attractions. 
To tackle this propblem, we introduced dilute partitions in section~III, as a new tool, to characterize the behaviors of trajectories generated by system~\eqref{MODEL}.  The use of dilute partitions  enabled us to grasp the qualitative properties of the dynamics of the formations that are needed to prove the convergence results:  On one hand, it reveals the fact that self-clustering trajectories are all bounded, as shown in Proposition~\ref{pro:selfclustering}. On the other hand, it precludes the possibility of system~\eqref{MODEL} having unbounded trajectories, as implied by Proposition~\ref{pro:relatepartitiontodynamicalsystem}.  
Further, we note that the class of dilute partitions is itself a rich question. We have exhibited in section~III some intriguing facts of it: For example, the existence of a nontrivial dilute partition in Proposition~\ref{NC}, and the existence of diluting sequence in Theorem~\ref{CCOSODC}.   These facts are independent of the dynamical system~\eqref{MODEL}, and hence can be used to solve other difficult multi-agent control problems that involve large sized formations.  Future work may focus on establishing system convergence under the assumption that interaction functions $g_{ij}$'s have not only fading attractions, but also {\em finite repulsions}. We are also interested in studying the system behavior when the network topology $G$ is directed and/or time-varying. Of course, in either of the two cases above, system~\eqref{MODEL} is not a gradient system anymore. It is thus interesting to know whether or not trajectories of system~\eqref{MODEL} still converge. This also lies in the scope of our future research.

\section*{Acknowledgements}
The author here is grateful for discussions with Prof. Roger Brockett at Harvard University, Prof. Tamer Ba\c sar and Prof. M.-A. Belabbas at the University of Illinois at Urbana-Champaign.  






\bibliographystyle{unsrt}
\bibliography{FC}

\begin{thebibliography}{10}

\bibitem{GP}
V.~Gazi and K.M. Passino.
\newblock A class of attractions/repulsion functions for stable swarm
  aggregations.
\newblock {\em {I}nternational {J}ournal of {C}ontrol}, 77(18):1567--1579,
  2004.

\bibitem{chu2003self}
T.~Chu, L.~Wang, and T.~Chen.
\newblock Self-organized motion in anisotropic swarms.
\newblock {\em Journal of Control Theory and Applications}, 1(1):77--81, 2003.

\bibitem{XC2014ACC}
X.~Chen.
\newblock Gradient flows for organizing multi-agent system.
\newblock In {\em American Control Conference (ACC)}, pages 5109--5114. IEEE,
  2014.

\bibitem{krick2009}
L.~Krick, M.E. Broucke, and B.A. Francis.
\newblock Stabilization of infinitesimally rigid formations of multi-robot
  networks.
\newblock {\em International Journal of Control}, 82(3):423--439, 2009.

\bibitem{dimarogonas2008stability}
D.V. Dimarogonas and K.H. Johansson.
\newblock On the stability of distance-based formation control.
\newblock In {\em Conference on Decision and Control (CDC)}, pages 1200--1205.
  IEEE, 2008.

\bibitem{xudongchen2015CDCtriangulatedformationcontrol}
X.~Chen, M.-A. Belabbas, and T.~Ba\c{s}ar.
\newblock Formation control with triangulated {L}aman graph.
\newblock In {\em Conference on Decision and Control (CDC)}. IEEE, 2015.
\newblock to appear.

\bibitem{zhiyongsun2015ECC}
Z.~Sun, Q.~Liu, and B.D.O. Anderson.
\newblock Generalized controllers for rigid formation stabilization with
  application to event-based controller design.
\newblock In {\em European Control Conference (ECC)}, pages 217--222. IEEE,
  2015.

\bibitem{AB2012CDC}
M.-A. Belabbas, S.~Mou, A.S. Morse, and B.D.O. Anderson.
\newblock Robustness issues with undirected formations.
\newblock In {\em Conference on Decision and Control (CDC)}, pages 1445--1450.
  IEEE, 2012.

\bibitem{sun2014CDC}
Z.~Sun, S.~Mou, B.D.O. Anderson, and A.S. Morse.
\newblock Formation movements in minimally rigid formation control with
  mismatched mutual distances.
\newblock In {\em Conference on Decision and Control (CDC)}, pages 6161--6166.
  IEEE, 2014.

\bibitem{USZB}
U.~Helmke, S.~Mou, Z.~Sun, and B.D.O. Anderson.
\newblock Geometrical methods for mismatched formation control.
\newblock In {\em Conference on Decision and Control (CDC)}, pages 1341--1346.
  IEEE, 2014.

\bibitem{mou2014CDC}
S.~Mou, A.S. Morse, M.-A. Belabbas, and B.D.O. Anderson.
\newblock Undirected rigid formations are problematic.
\newblock In {\em Conference on Decision and Control (CDC)}, pages 637--642.
  IEEE, 2014.

\bibitem{BDO2014CT}
B.D.O. Anderson and U.~Helmke.
\newblock Counting critical formations on a line.
\newblock {\em SIAM Journal on Control and Optimization}, 52(1):219--242, 2014.

\bibitem{UH2013E}
U.~Helmke and B.D.O. Anderson.
\newblock Equivariant {M}orse theory and formation control.
\newblock In {\em Communication, Control, and Computing (Allerton), 51st Annual
  Allerton Conference on}, pages 1576--1583. IEEE, 2013.

\bibitem{xudongchen2015CDCformationcontroltimevaryinggraph}
X.~Chen, M.-A. Belabbas, and T.~Ba\c{s}ar.
\newblock Controllability of formations over directed graphs.
\newblock In {\em Conference on Decision and Control (CDC)}. IEEE, 2015.
\newblock to appear.

\bibitem{XC2014CDC}
X.~Chen and R.W. Brockett.
\newblock Centralized and decentralized formation control with controllable
  interaction laws.
\newblock In {\em Conference on Decision and Control (CDC)}, pages 601--606.
  IEEE, 2014.

\bibitem{JB2006cdc}
J.~Baillieul.
\newblock Remarks on a simple control law for point robot formations with
  exponential complexity.
\newblock In {\em Conference on Decision and Control (CDC)}, pages 3357--3362.
  IEEE, 2006.

\bibitem{cao2008control}
M.~Cao, B.D.O. Anderson, A.S. Morse, and C.~Yu.
\newblock Control of acyclic formations of mobile autonomous agents.
\newblock In {\em Conference on Decision and Control (CDC), 2008}, pages
  1187--1192. IEEE, 2008.

\bibitem{baillieul2007combinatorial}
J.~Baillieul and L.~McCoy.
\newblock The combinatorial graph theory of structured formations.
\newblock In {\em Conference on Decision and Control (CDC)}, pages 3609--3615.
  IEEE, 2007.

\bibitem{AB2013TAC}
M.-A. Belabbas.
\newblock On global stability of planar formations.
\newblock {\em IEEE Transactions on Automatic Control}, 58(8):2148--2153, 2013.

\bibitem{AL2014ECC}
M.~Lorenzen and M.-A. Belabbas.
\newblock Distributed local stabilization in formation control.
\newblock In {\em European Control Conference (ECC)}, pages 2914--2919. IEEE,
  2014.

\bibitem{lin2014distributed}
Z.~Lin, L.~Wang, Z.~Han, and M.~Fu.
\newblock Distributed formation control of multi-agent systems using complex
  {L}aplacian.
\newblock {\em IEEE Transactions on Automatic Control}, 59(7):1765--1777, 2014.

\bibitem{chen2015decentralized}
X.~Chen.
\newblock Decentralized formation control with a quadratic {L}yapunov function.
\newblock In {\em American Control Conference (ACC)}, pages 4362--4367. IEEE,
  2015.

\bibitem{cucker2007emergent}
F.~Cucker and S.~Smale.
\newblock Emergent behavior in flocks.
\newblock {\em IEEE Transactions on Automatic Control}, 52(5):852--862, 2007.

\bibitem{lasalle1960some}
J.P. LaSalle.
\newblock Some extensions of {L}iapunov's second method.
\newblock {\em IRE Transactions on Circuit Theory}, 7(4):520--527, 1960.

\bibitem{macqueen1967some}
J.~MacQueen.
\newblock Some methods for classification and analysis of multivariate
  observations.
\newblock In {\em Proceedings of the fifth Berkeley symposium on mathematical
  statistics and probability}, volume~1, pages 281--297. Oakland, CA, USA.,
  1967.

\end{thebibliography}

\section*{Appendix A}
\begin{proof}[Proof of Lemma~\ref{lem:boundedforPI}] 
Let $\cal{I}' \subset \cal{I}$, with $|\cal{I}'| = k$, be chosen such that  
$
\pi(\cal{I}'\; t) = \Pi(k\; t)  
$. 
Let $\langle\cdot, \cdot \rangle $ be the standard inner-product in $\R^n$. For a vector $v\in \R^n$, let 
$$
\hat v :=
\left\{
\begin{array}{ll}
v /\|v\| & \mbox{if } v \neq 0, \\
0  & \mbox{otherwise}.
\end{array}
\right. 
$$  
We first show that  for all $i\in \cal{I}'$, 
\begin{equation}\label{eq:idon'tsleepwelltoday}
\langle \hat c(p_{\cal{I}'}(t)), c(p_{i}(t)) \rangle \ge r - N(r - \Pi(k\; t)).
\end{equation}
Let $w_i: = |V_i| / |V_{\cal{I}'}|$; then, we can express $\pi(\cal{I}'\; t)$ as 
$$
\pi(\cal{I}'\; t) =\langle \hat c(p_{\cal{I}'}(t)),\, \sum_{i\in \cal{I}'}w_i \, c(p_i(t)) \rangle. 
$$
Then, using the fact that for all $i\in \cal{I}' $, 
$$\langle \hat c(p_{\cal{I}'}(t)), c(p_j(t)) \rangle \le \|c(p_j(t))\| \le r,$$  
we obtain
$$
 \langle \hat c(p_{\cal{I}'}(t)), c(p_i(t)) \rangle \ge  r - \frac{1}{w_i} \(r - \pi(\cal{I}'\; t)\).  
$$
Since $1/w_i \le N$, we establish~\eqref{eq:idon'tsleepwelltoday}.  

To proceed, we consider the time derivative of $\pi(\cal{I}'\; t)^2$ at~$t$:  First, let  
$$
E_{\cal{I}'}:=\{(v_{a},v_{b}) \in E \mid v_{a}\in V_{\cal{I}'},\, v_{b} \notin V_{\cal{I}'} \}. 
$$ 
Note that $E_{\cal{I}'}$ is nonempty because (i) $V_{\cal{I}'}$ is a proper subset of $V$since $\cal{I}'$ is a proper subset of $\cal{I}$, and (ii) $G$ is connected. Further, for an edge $(v_a, v_b)\in E_{\cal{I}'}$, let 
$$\rho_{ab}(t) := \left\langle c(p_{\cal{I}'}(t)) , x_{b}(t) - x_{a}(t) \right\rangle.$$ 
Then, with the definitions of $E_{\cal{I}'}$ and $\rho_{ab}(t)$, we have 
\begin{equation}\label{eq:lajitong}
\frac{d}{dt}\pi(\cal{I}'\; t)^2 = 2\sum_{(v_{a},v_{b}) \in E_{\cal{I}'} } g_{ab}(d_{ab}(t))\, \rho_{ab}(t).
\end{equation} 
Note that $d\pi(\cal{I}'\; t)^2/dt \ge 0$ because $\Pi(k\; t') \le \Pi(k\; t) $ for all $t' \le t$.  We also note that $g_{ab}(d_{ab}(t)) > 0$ for all $(v_a,v_b) \in E_{\cal{I}'}$. This holds because $p(t)$ is a self-clustering trajectory with respect to $(l_0, l_1)$ and $l_1 > \alpha_+$, which in particular implies that $d_{ab}(t) > \alpha_+$. 
All then imply that there is at least an edge $(v_a,v_b) \in E_{\cal{I}'}$ such that 
$\rho_{ab}(t) \ge 0$.

Choose any such edge $(v_a,v_b) \in E_{\cal{I}'}$, and let indices~$i, j\in \cal{I}$ be such that $v_a\in V_{i}$ and $v_{b} \in V_{j}$. 
It should be clear that  $i\in \cal{I}'$ and $j\notin \cal{I}'$. Note that since $\cal{I}'$ is chosen such that $\Pi(k\; t) = \pi(\cal{I}'\; t)$, we have 
$$\langle \hat c(p_{\cal{I}'}(t)), c(p_j(t)) \rangle \le \| c(p_{\cal{I}'}(t))\|.$$ Because otherwise, we can first find an $i'\in \cal{I}'$ with $\langle \hat c(p_{\cal{I}'}(t)), c(p_i'(t))\rangle \le \|c(p_{\cal{I}'}(t))\|$, and then replace this $i'\in \cal{I}'$ with~$j$. By doing so, we obtain a strictly larger $\pi(\cal{I}'\;t)$, which is a contradiction. On the other hand,     
we show that 
\begin{equation}\label{eq:12:22pm}
 \left\langle  \hat c(p_{\cal{I}'}(t)),\, c(p_{j}(t)) \right\rangle > \langle \hat c(p_{\cal{I}'}(t)), c(p_{i}(t)) \rangle - 2l_0. 
\end{equation}
To prove~\eqref{eq:12:22pm}, first note that 
$$
\left\{
\begin{array}{l}
\| x_{a}(t) - c(p_{i}(t))\| < \phi(p_{i}(t)) < l_0 \vspace{3pt}\\
\| x_{b}(t) - c(p_{j}(t))\| < \phi(p_{j}(t))  < l_0.
\end{array}
\right.
$$
So, we obtain
\begin{equation*}\label{eq:1:59pm}
\left\{
\begin{array}{l}
\langle  \hat c(p_{\cal{I}'}(t)),    c(p_{i}(t)) - x_{a}(t)  \rangle < l_0  \vspace{3pt}\\
\langle  \hat c(p_{\cal{I}'}(t)),   x_{b}(t) - c(p_{j}(t)) \rangle < l_0,
\end{array}
\right. 
\end{equation*}
which implies that 
$$
\left\langle  \hat c(p_{\cal{I}'}(t)),\, c(p_{j}(t)) \right\rangle > \langle \hat  c(p_{\cal{I}'}(t)), c(p_{i}(t)) \rangle - 2l_0 + \rho_{ab}(t).
$$
Since $\rho_{ab}(t) \ge 0$, we establish~\eqref{eq:12:22pm}.

Now, by combining~\eqref{eq:idon'tsleepwelltoday} and~\eqref{eq:12:22pm}, we obtain the following inequality:
\begin{equation}\label{eq:7:27pmatcaffebene}
 \left\langle  \hat c(p_{\cal{I}'}(t)),\, c(p_{j}(t)) \right\rangle \ge r - N (r - \Pi(k\; t) ) -  2 l_0.
\end{equation}
Let $\cal{I}'':= \cal{I}' \sqcup \{j\}$. Since $j\notin \cal{I}'$, we have $|\cal{I}''| = k+1$. It now suffices to show that 
$$
\pi(\cal{I}''\; t) \ge  \left\langle  \hat c(p_{\cal{I}'}(t)),\, c(p_{j}(t)) \right\rangle. 
$$
Let 
$\tilde w_{\cal{I}'}:= |V_{\cal{I}'}|/|V_{\cal{I}''}|$ and  $\tilde w_j :=   |V_j| / |V_{\cal{I}''}|$. It should be clear that $\tilde w_{\cal{I}'} + \tilde w_{j} = 1$, and $c(p_{\cal{I}''}(t)) =  \tilde w_{\cal{I}'} c(p_{\cal{I}'}(t)) + \tilde w_{j} c(p_j(t))$. We now express $\pi(\cal{I}''\; t)$ as 
$$\pi(\cal{I}''\; t) =  \tilde w_{j} \langle \hat c(p_{\cal{I}''}(t)),c(p_j(t))\rangle +  \tilde w_{\cal{I}'} \langle \hat c(p_{\cal{I}''}(t)),c(p_{\cal{I}'}(t)) \rangle.$$
For the first inner-product, we have  
$$
\langle \hat c(p_{\cal{I}''}(t)), c(p_j(t)) \rangle \ge \langle \hat c(p_{\cal{I}'}(t)),\, c(p_{j}(t)) \rangle,
$$
and the equality holds if and only if $c(p_{\cal{I}'}(t))$ and $c(p_j(t))$ are aligned. For the second inner-product, first note that 
$$ \|c(p_{\cal{I}''}(t))\| \le \Pi(k+1\; t) \le \Pi(k\; t) = \|c(p_{\cal{I}'}(t))\|,$$
and hence
$$
\langle \hat c(p_{\cal{I}''}(t)), c(p_{\cal{I}'}(t)) \rangle \ge\langle \hat c(p_{\cal{I}'}(t)), c(p_{\cal{I}''}(t)) \rangle.
$$
Then, using the fact that $$\langle \hat c(p_{\cal{I}'}(t)), c(p_j(t)) \rangle \le \| c(p_{\cal{I}'}(t))\|,$$ we obtain 
$$
\langle \hat c(p_{\cal{I}'}(t)), c(p_{\cal{I}''}(t)) \rangle \ge  \langle \hat c(p_{\cal{I}'}(t)),\, c(p_{j}(t)) \rangle.
$$
Combining the facts above, we conclude that $\pi(\cal{I}''\; t) \ge  \left\langle  \hat c(p_{\cal{I}'}(t)),\, c(p_{j}(t)) \right\rangle$,  
 which completes the proof. 
\end{proof}

\section*{Appendix B}
We establish here Proposition~\ref{pro:relatepartitiontodynamicalsystem}. We need to first introduce a class of subsets of $P_G$, termed {\it dissipation zones}, and establish properties  that are needed to prove Proposition~\ref{pro:relatepartitiontodynamicalsystem}. 

\subsection{Dissipation zones}\label{ssec:dz}
Let $d$ be a positive number. For each $(v_i,v_j) \in E$, 
define a subset $X_{G,ij}(d)$ of $P_G$ as follows:
\begin{equation}\label{eq:defXGijd}
X_{G,ij}(d) :=\left  \{ p\in P_G  \mid  \|x_j - x_i\| = d \right \}; 
\end{equation}
We further define
$X_{G}(d):= \cup_{(v_i,v_j) \in E}\,  X_{G,ij}(d)$. 
Note that if $d > D_+$, then $X_{G}(d)$ does not contain any equilibrium of system~\eqref{MODEL}. We call any such set $X_{G}(d)$, for $d > D_+$, a {\bf dissipation zone}. 
Define a function $\mu:\R_+\longrightarrow \R$ as follows:
\begin{equation}\label{eq:defmud}
\mu(d) := \inf\{\|f(p)\| \mid p\in X_{G}(d) \};
\end{equation}
we establish in this subsection the following fact:

\begin{pro}\label{pro:defnud}
Let $\mu:\R_+ \longrightarrow \R$ be defined in~\eqref{eq:defmud}. Then, 
\begin{enumerate}
\item $\mu$ is continuous. 
\item $\mu(d) > 0$ for all $d \ge D_+$.  
\end{enumerate}\,
\end{pro}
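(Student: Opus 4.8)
The plan is to make the second assertion the core of the argument and to harvest the first as a by-product; the common engine is a dichotomy \emph{bounded versus escaping to infinity} for near-minimizing sequences, the escaping branch being tamed by the diluting-sequence theorem (Theorem~\ref{CCOSODC}). So I would treat part~(2) first.

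For part~(2), argue by contradiction: suppose $\mu(d)=0$ for some $d\ge D_+$, and pick $p_k\in X_G(d)$ with $\|f(p_k)\|\to 0$; passing to a subsequence, assume $\|x_i(k)-x_j(k)\|=d$ for one fixed edge $(v_i,v_j)\in E$ and all $k$. Two normalizations: Proposition~\ref{pro:nmvecfld} forces $\inf_k d_-(p_k)>0$ (else $\|f(p_k)\|$ would be unbounded), and, translating, we put every centroid at the origin. Now the dichotomy. If $\{p_k\}$ stays bounded, these two facts confine the $p_k$ to a compact subset of $P_G$, so a subsequence converges to $p_*\in X_{G,ij}(d)$ with $f(p_*)=0$ by continuity of $f$; but $d_+(p_*)\ge d\ge D_+$, and a minor strengthening of the bipartition argument proving Part~1 of Theorem~\ref{thm:MAIN} shows no equilibrium can have $d_+(p)\ge D_+$ (not merely $d_+(p)>D_+$, which is the case already settled), a contradiction. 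If instead $\{p_k\}$ is unbounded in the sense that $\sup_k\phi(p_k)=\infty$, then Theorem~\ref{CCOSODC} yields a diluting subsequence carrying nontrivial dilute partitions $\sigma_k\in\Sigma(l_k\;p_k)$ with $l_k\to\infty$, all inducing the same vertex partition $V=\sqcup_{r=1}^m V_r$, $m\ge 2$, and $\mathcal{L}_-(\sigma_k)\le L_0$. Since $d$ is fixed while adjacent clusters are more than $l_k\to\infty$ apart, for large $k$ the endpoints $v_i,v_j$ lie in one cluster, say $V_1$; set $G_1=G[V_1]$ and let $p_1(k)$ be the sub-configuration on $V_1$. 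Each edge leaving $V_1$ has length exceeding $l_k$, so by the fading-attraction condition the contribution of all such edges to $f$ tends to $0$; hence $\|f_{V_1}(p_k)-f'(p_1(k))\|\to 0$, where $f'$ is the vector field of the sub-system induced by $G_1$, and therefore $f'(p_1(k))\to 0$. But the $p_1(k)$ have diameter $\le L_0$ and neighbor-distances $\ge\inf_k d_-(p_k)>0$, so after translation they lie in a compact subset of $P_{G_1}$; a convergent subsequence gives an equilibrium $q_*$ of the $G_1$-sub-system with $d_+(q_*)\ge d\ge D_+=(N-1)\alpha_+>(|V_1|-1)\alpha_+$, the last inequality because $|V_1|<N$. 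Since the $G_1$-sub-system uses the same interaction functions, hence the same $\alpha_+$, Part~1 of Theorem~\ref{thm:MAIN} applied to the connected graph $G_1$ gives $d_+(q_*)\le(|V_1|-1)\alpha_+$, a contradiction. This proves~(2).

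For part~(1), I would first note that fixing the edge $(v_i,v_j)$, placing $x_i$ at the origin and $x_j$ at $(d,0,\dots,0)$, identifies $X_{G,ij}(d)$ with a single collision-free configuration set for the remaining $N-2$ agents, on which $f$ depends jointly continuously on $d$ and those positions; thus $\mu_{ij}(d):=\inf\{\|f(p)\|\mid p\in X_{G,ij}(d)\}$ is an infimum of a continuous family, hence upper semicontinuous, and $\mu=\min_{(v_i,v_j)\in E}\mu_{ij}$ inherits this. Lower semicontinuity follows from running the same dichotomy on a near-minimizing sequence $p_k\in X_G(d_k)$ with $d_k\to d$ and $\|f(p_k)\|\to\liminf_{d'\to d}\mu(d')$: Proposition~\ref{pro:nmvecfld} again excludes collisions, the bounded case returns a point of $X_G(d)$ attaining the limit, and the escaping case is once more dissected by Theorem~\ref{CCOSODC} together with the cluster-localization above (which over the range $d\ge D_+$ that ultimately matters reproduces the contradiction, and in general forces an escaping near-minimizer to converge, on a cluster, to an equilibrium of a proper sub-system realizing the same edge length).

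The main obstacle is precisely the escape-to-infinity branch: ruling out configurations that retain one edge of fixed length while the overall size diverges and $\|f\|$ degenerates. Here neither a compactness argument nor the potential function (which need not be radially unbounded under fading attraction) is directly available, and this is exactly where the combinatorial-metric structure of dilute partitions is indispensable, reducing the phenomenon to that of a proper connected sub-system whose equilibria are already known, via Theorem~\ref{thm:MAIN}, to have uniformly bounded edge lengths.
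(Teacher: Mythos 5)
Your argument for part~(2) is correct but takes a genuinely different route from the paper. The paper proves $\mu(d)>0$ by induction on $N$: it defines $\nu(d)$ as the minimum of $\mu_{G'}(d)$ over connected proper subgraphs, uses Proposition~\ref{NC} to locate the edge of length $d$ inside one cluster of a dilute partition when $d_+(p)$ is large, and bounds $\|f(p)\|\ge\|f'(p')\|-\|h\|>\nu(d)/2$ quantitatively, with the middle range handled by compactness modulo the Euclidean group. You instead argue by contradiction and compactness: a vanishing subsequence of $\|f\|$ either converges (after centering) to an equilibrium of the full system with $d_+\ge D_+$, or, via Theorem~\ref{CCOSODC}, localizes onto a cluster whose limit is an equilibrium of a proper connected sub-system with an edge of length $d>(|V_1|-1)\alpha_+$; both are excluded by the upper bound of Part~1 of Theorem~\ref{thm:MAIN} applied to $G$ or to $G_1$. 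This trades the paper's induction for the already-established equilibrium size bound, at the cost of the ``minor strengthening'' to $d_+(p)\ge D_+$ in the boundary case $d=D_+$ --- which you correctly flag, and which does go through (a non-strict bipartition $x^1_j-x^1_i\ge\alpha_+$ still yields $g_{ij}(d_{ij})>0$ and a strictly positive drift of $s'(p)$). The price is that your argument is purely qualitative, whereas the paper's yields the explicit lower bound $\nu(d)/2$ on unbounded configurations; both versions suffice for the downstream use in Proposition~\ref{pro:relatepartitiontodynamicalsystem}.

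Part~(1) has a genuine gap in the lower-semicontinuity step. Your bounded branch is fine, but in the escaping branch you claim the near-minimizers $p_k\in X_G(d_k)$ ``converge, on a cluster, to an equilibrium of a proper sub-system.'' That is only true when $\liminf_{d'\to d}\mu(d')=0$; for a general limit $L>0$ the cluster limit is merely a configuration $q_*$ of the $G_1$-sub-system with $\|f'(q_*)\|\le L$, which bounds $\mu_{G_1}(d)$ but does not by itself produce a configuration of the \emph{full} graph in $X_G(d)$ with $\|f\|\le L+\epsilon$ --- the other clusters still carry their own internal forces. To close this you would have to extract limits of \emph{all} clusters simultaneously and reassemble them at mutual separations tending to infinity, using fading attraction to make the cross-cluster contributions negligible; that reassembly step is absent from your sketch. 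The paper sidesteps the dichotomy entirely: assuming w.l.o.g.\ $\mu(d)\le\mu(d')$, it perturbs a near-minimizing sequence for $\mu(d)$ (confined to $Q_G(d_*)$ by Proposition~\ref{pro:nmvecfld}) into $X_G(d')$ by moving a single agent a distance $|d'-d|$, and invokes Lemma~\ref{lem:evaluatedifferenceoftwovectorfields} --- a uniform-continuity estimate for $f$ on $Q_G(d)$ that itself rests on fading attraction to control the non-compact directions --- to get $\mu(d')\le\mu(d)+\epsilon$ directly, with no limit extraction and no case split. Your upper-semicontinuity observation is essentially this perturbation in disguise, so the cleanest repair is to run the paper's two-sided version of it rather than the extraction argument.
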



Recall that $d_-(p)$ (defined in~\eqref{eq:defd-d+}) is the minimum distance between a pair of neighboring agents in $p$. For a positive number $d > 0$, 
we define a subset of $P_G$ as follows: 
\begin{equation}\label{eq:defQd}
Q_G(d):= \left\{ p\in P_G \mid d_-(p) \ge d  \right \}.
\end{equation}
We now establish the following fact:

\begin{lem}\label{lem:evaluatedifferenceoftwovectorfields}
Let $d> 0$ be a fixed number. Then,  for any $\epsilon > 0$, there is a $\delta > 0$ such that if $p$ and $p'$ are in $Q_G(d)$ with $\|p - p'\| \le \delta$, 
then   
$\|f(p) - f(p')\| \le \epsilon$. 
\end{lem}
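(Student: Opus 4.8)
The plan is to reduce the assertion to a uniform‑continuity estimate for a single edge, and then to control the behaviour of that estimate at infinity by means of the fading‑attraction hypothesis; the finiteness of $V$ and $E$ then lets us assemble a single $\delta$ that works for the whole vector field.

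First I would isolate the edge terms. For $(v_i,v_j)\in E$ the contribution of this edge to the vector field depends on a configuration $p$ only through the vector $u:=x_j-x_i$, and for $p\in Q_G(d)$ we have $\|u\|\ge d$; so set $H_{ij}\colon\{u\in\R^n\mid\|u\|\ge d\}\to\R^n$, $H_{ij}(u):=g_{ij}(\|u\|)u$. If $p,p'\in Q_G(d)$ and $u=x_j-x_i$, $u'=x_j'-x_i'$ are the corresponding edge vectors, then $\|u-u'\|\le 2\|p-p'\|$. Writing $f_i(p)=\sum_{v_j\in V_i}H_{ij}(x_j-x_i)$ and using $\|f(p)-f(p')\|\le\sum_{v_i\in V}\|f_i(p)-f_i(p')\|$, the lemma will follow once each $H_{ij}$ is shown to be uniformly continuous on $\{\|u\|\ge d\}$: given $\epsilon>0$, pick for every edge a modulus forcing $\|H_{ij}(u)-H_{ij}(u')\|\le\epsilon/(2|E|)$ whenever $\|u-u'\|$ is small, take the minimum of the finitely many resulting thresholds, halve it, and sum over the at most $2|E|$ oriented edge contributions.

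The heart of the matter is the uniform continuity of $H_{ij}$ on the \emph{unbounded} closed set $\{\|u\|\ge d\}$; a naive ``$C^1$ on a compact set'' argument is unavailable here precisely because configurations in $Q_G(d)$ may be arbitrarily large. I would factor $H_{ij}(u)=\ol g_{ij}(\|u\|)\,\hat u$ with $\ol g_{ij}(t)=t\,g_{ij}(t)$ as in~\eqref{eq:defbargij} and $\hat u=u/\|u\|$. The angular factor $u\mapsto\hat u$ is bounded by $1$ and Lipschitz on $\{\|u\|\ge d\}$ with constant $2/d$, since $\|\hat u-\hat v\|\le(\|u-v\|+|\,\|u\|-\|v\|\,|)/\|v\|\le 2\|u-v\|/d$ for $\|u\|,\|v\|\ge d$. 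The radial factor $t\mapsto\ol g_{ij}(t)$ is continuous on $[d,\infty)$ and, by the fading‑attraction condition of Definition~\ref{def:fadingattraction}, tends to $0$ as $t\to\infty$; a continuous function on a half‑line with a finite limit at infinity is bounded and uniformly continuous, and precomposing with the $1$‑Lipschitz map $u\mapsto\|u\|$ keeps both properties. Since a product of bounded uniformly continuous maps is again uniformly continuous, $H_{ij}$ is uniformly continuous on $\{\|u\|\ge d\}$, which is exactly what was needed.

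I expect the only real obstacle to be that passage to infinity: without fading attraction the radial profile $\ol g_{ij}$ could oscillate (say, like $\sin(t^2)$) and fail to be uniformly continuous on $[d,\infty)$, in which case the statement itself would be false; it is the combination of staying bounded away from collisions, $d_-(p)\ge d$, with $\lim_{t\to\infty}t\,g_{ij}(t)=0$ that makes everything work. The remaining steps — the triangle‑inequality reduction to edges and the $\epsilon/(2|E|)$ bookkeeping — are routine.
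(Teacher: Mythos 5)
Your proposal is correct and follows essentially the same route as the paper: both reduce the claim to uniform continuity of the per-edge map $u\mapsto g_{ij}(\|u\|)\,u$ on $\{\|u\|\ge d\}$, use the fading-attraction condition $\lim_{t\to\infty}t\,g_{ij}(t)=0$ to control the unbounded region, and use compactness for the remaining annulus. Your factorization into the radial part $\ol g_{ij}(\|u\|)$ and the angular part $\hat u$ is just a cleaner packaging of the paper's two-case argument (both norms large versus both in a compact set), and the edge-counting bookkeeping matches.
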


\begin{proof}
Let 
$p = (x_1,\ldots, x_N)$ and $p' = (x'_1,\ldots, x'_N)$. Denote by
$d_{ij}:= \|x_j - x_i\|$ and  $d'_{ij}:= \|x'_j - x'_i\|$. 
Note that 
$$
\|f(p) - f(p')\|^2 = \sum_{v_i\in V}\|f_i(p) - f_i(p')\|^2, 
$$
with each term $\|f_i(p) - f_i(p')\|$ bounded above by
$$ \sum_{v_j\in V_i} \| g_{ij}(d_{ij})(x_j - x_i) - g_{ij}(d'_{ij})(x'_j - x'_i) \|.$$
We also note that if $\|p - p'\| < \delta$, then 
$$ \|(x_j - x_i) - (x'_j - x'_i) \| < 2\delta.$$ 
It thus suffices to show that for any $\epsilon' > 0$, there is a $\delta' > 0$ such that if two vectors $u$, $u'\in \R^n$ satisfy 
\begin{equation}\label{eq:forcontinuity0}
\min\{ \|u\|,  \|u'\|\} \ge d \hspace{10pt} \mbox{ and } \hspace{10pt} \|u - u'\| < \delta', 
\end{equation}
then, for all $(v_i,v_j) \in E$, we have
\begin{equation}\label{eq:forcontinuity1}
\|g_{ij}(\|u\|) u - g_{ij}(\|u'\|) u' \| < \epsilon'. 
\end{equation} 

Recall that $\ol g_{ij}(d)$ (defined in~\eqref{eq:defbargij}) is given by $\ol g_{ij}(d) = dg_{ij}(d)$. From the condition of {\it fading attraction}, there exists a number $d_*$ such that if $\|u\| \ge d_*$, then
$ \ol g_{ij}(\|u\|) < \epsilon' / 2$ for all $(v_i,v_j)\in E$.   
Hence,  if $\min\{\|u\|,  \|u'\| \}\ge d_*$, then,  
$$
\|g_{ij}(\|u\|) u - g_{ij}(\|u'\|) u' \| \le \ol g_{ij}(\|u\|) + \ol g_{ij}(\|u'\|) < \epsilon'.
$$
Define a subset $K$ of $\R^n$ as follows:
$$
K:= \{u\in \R^n \mid d \le \|u\| \le d_* + 1\}.
$$
Since $K$ is compact and the map 
$$
\widetilde g_{ij}: u\mapsto g_{ij}(\|u\|) u
$$ 
is continuous, 
there exists a $\delta'\in (0,1)$ such that if $u$ and $u'$ are in $K$, with $\|u - u'\| < \delta'$, 
then 
$
\|\widetilde g_{ij}(u) - \widetilde g_{ij}(u') \| < \epsilon'
$  for all $(v_i,v_j) \in E$. 
Now, let $u, u'\in \R^n$ satisfy~\eqref{eq:forcontinuity0}, and let $\delta' < 1$. Then, either $\min\{\|u\|, \|u'\|\} \ge d_*$ or $\{u, u' \}\subset K$. Since~\eqref{eq:forcontinuity1} holds in either of the two cases, we complete the proof. 
\end{proof}

With Lemma~\ref{lem:evaluatedifferenceoftwovectorfields}, we prove below Proposition~\ref{pro:defnud}:

\begin{proof}[Proof of Proposition~\ref{pro:defnud}]
We first show that $\mu$ is continuous, and then show that $\mu(d) > 0$ for all $d\ge D_+$.

\noindent
{\it 1). Proof that $\mu$ is continuous}.  
We fix a distance $d>0$, and show that $\mu$ is continuous at $d$. Specifically, we show that for any $\epsilon > 0$, there is a $\delta > 0$ such that if $|d' - d|< \delta$, then 
$|\mu(d) - \mu(d')| < \epsilon$.

Let $p\in X_G(d)$ and  $B$ be a closed neighborhood of $p$ in $P_G$. Then, there is an open neighborhood $I$ of $d$ in $\R_+$ such that $X_G(d')$ intersects $B$ for all $d'\in I$. From Proposition~\ref{pro:nmvecfld}, there exists a $d_*$ such that if $p'' \in X_G(d)$, with $d_-(p'')< d_*$, then $\|f(p'')\| \ge \|f(p')\| $ for all $p'\in B$. 
This, in particular, implies that for all $d' \in I$, we have
\begin{equation}\label{eq:alihuiwoyoujian}
\inf\{ \|f(p)\| \mid p\in X_{G}(d') \cap Q_G(d_*) \} = \mu(d').
\end{equation}
Let $\delta>0$ be sufficiently small such that if $|d' - d| < \delta$, then $d'\in I$.

Choose a number $d'$ with $|d' - d| < \delta$;  without loss of generality, we assume that $\mu(d) \le \mu(d')$. 
From~\eqref{eq:alihuiwoyoujian}, there is a sequence $\{p(i)\}_{i\in\N}$, with each $p(i)\in X_{G}(d)\cap Q_G(d_*)$,  such that
$
\lim_{i\to \infty} \|f(p(i))\| = \mu(d)
$. 
Note that if $\delta$ is chosen sufficiently small, then for each $p(i)$ in the sequence, there exists a $p'(i)$ in the intersection of $X_{G}(d')$ and $Q_G(d_*/ 2)$ such that 
$
\|p'(i) -p(i)\| = |d' - d|
$.  
To see this, let $p(i) = (x_1(i),\ldots, x_N(i))$; without loss of generality,  we assume that 
$\|x_2(i) - x_1(i)\| = d$. We then set $p'(i)$ as follows: let 
$$
x'_1(i) := x_1(i) + (d'/ d - 1) (x_1(i) - x_2(i)),
$$
and let $x'_j(i) := x_j(i) $ for $v_j \neq v_1$.  
Then, by construction, $\|x'_2(i) - x'_1(i)\| = d'$, and 
$$ \|p'(i) - p(i)\| = \|x'_1(i) - x_1(i)\| = |d' - d| < \delta.$$
Moreover, if we let $\delta < d_* /2$, then from the fact that $p(i)\in Q_G(d_*)$, we have $p'(i) \in Q_G(d_*/2)$. 


From Lemma~\ref{lem:evaluatedifferenceoftwovectorfields}, we can choose 
 $\delta$ sufficiently small such that if $p$ and $p'$ are in $Q_G(d_*/2)$, with $\|p' - p\| \le \delta$, then $\|f(p) - f(p')\| \le \epsilon$. Since $\|p(i) - p'(i)\| = |d' - d| < \delta$, we have 
\begin{equation}\label{eq:boundedfpi}
\| f(p'(i)) - f(p(i)) \| \le \epsilon, \hspace{10pt} \forall\, i\in \N.
\end{equation}
This, in particular, implies that the sequence $\{\| f(p'(i)) \|\}_{i\in \N}$ is bounded, and hence there is a converging subsequence  $\{\|f(p'(j_i))\|\}_{i\in\N}$. We thus let 
$
\mu':= \lim_{i\to \infty} \| f(p'(j_i)) \| 
$. By definition, we have $\mu' \ge \mu(d')$. On the other hand, we also have $\mu(d') \ge \mu(d)$, and hence 
$
0\le \mu(d') - \mu(d) \le \mu' - \mu(d)
$. 
Furthermore, from~\eqref{eq:boundedfpi}, we have 
$
\mu' - \mu(d) \le \epsilon 
$, and hence $\mu(d') - \mu(d) \le \epsilon$.  
This establishes the continuity of~$\mu$. 

\vspace{3pt}
\noindent
{\it 2). Proof that $\mu(d) > 0$ for all $d\ge D_+$}.
The proof is carried out by induction on the number of vertices of $G$. For the base case $N = 2$, we have $D_+ = \alpha_+$. The set $X_G(d)$ is nothing but
$$
X_G(d) = \left \{  (x_1, x_2)\in \R^4  \mid \|x_2 - x_1\| = d \right \}.
$$
So then, for all $d > \alpha_+$, we have $\mu(d) = \sqrt{2} \, | \ol g_{12}(d) | > 0$. 

For the inductive step, we assume that the statement holds for all $N \le (k-1)$, and prove for $N = k$.  We first have some notations. 
Let $G' = (V', E')$ be a connected, proper subgraph of $G$. Let $S'$ be the sub-system induced by $G'$, and $f'(p')$ be the associated vector field of $S'$ at $p'\in P_{G'}$. 
Similarly, let 
$$
X_{G', ij}(d) := \left\{ p'\in P_{G'} \mid  \|x_{j} - x_{i}\| = d \right\};
$$
and let $X_{G'}(d) := \cup_{(v_i,v_j)\in E'} X_{G', ij}(d)$.  We then define
$$
\mu_{G'}(d) := \inf\left\{ \|f'(p')\| \mid p' \in X_{G'}(d) \right\}.
$$
Note that if $d > (|V| -1) \, \alpha_+$, then   
$
d > (|V'| - 1) \, \alpha_+
$.  Thus, we can appeal to the induction hypothesis and obtain $\mu_{G'}(d) > 0$. We further define   
$$
\nu(d) := \min_{G'}\{ \mu_{G'}(d) \}, 
$$ 
where the minimum is taken over all  connected proper subgraphs of $G$. Then, $\nu(d) > 0$ for all $d > D_+ $.

We now fix a number $d > D_+ $, and prove that $\mu(d) > 0$. First, from Proposition~\ref{pro:nmvecfld}, there exists a $d_0 > 0$ such that 
\begin{equation}\label{eq:proveforsmallconfiguration}
\| f(p) \| > 1, \hspace{10pt} \mbox{ if } \hspace{5pt} p \in X_{G}(d) \mbox{ and }  d_-(p) < d_0.
\end{equation}
We also claim that there exists a  $d_1 > 0$ such that 
\begin{equation}\label{eq:proveforlargeconfiguration}
\|f(p)\| > \nu(d) /2, \hspace{10pt} \mbox{ if } \hspace{5pt} p \in X_{G}(d) \mbox{ and }  d_+(p) > d_1.
\end{equation}
Note that if this holds, then the proof is complete. Indeed, let $K$ be a subset of $X(d)$ defined as follows: 
$$
K := \left\{ p\in X_G(d) \mid d_0 \le \|x_j - x_i\| \le d_1, \, \forall (v_i,v_j) \in E\right\}.
$$
It is known that system~\eqref{MODEL} is an equivariant system with respect to the special Euclidean group. In particular, $\|f(p)\| = \|f(p')\|$ if $p$ and $p'$ are related by translation and/or rotation. On the other hand,  $K$ is compact modulo translation and rotation, and moreover, $f(p)$ does not vanish over $K$. We thus have that 
$
\inf_{p\in K} \| f(p) \|> 0
$. 
Combining this fact with~\eqref{eq:proveforsmallconfiguration} and~\eqref{eq:proveforlargeconfiguration}, we obtain $\mu(d) > 0$. 

It thus remains to show that there exists a $d_1 > 0$  such that~\eqref{eq:proveforlargeconfiguration} holds. 
First, from the condition of {\it fading attraction}, there exists an $l > 0$ such that for all $(v_i,v_j) \in E$, we have  
\begin{equation}\label{eq:verylargedistance}
0 < \ol g_{ij}(d') < \nu(d) / (2k^2), \hspace{10pt} \, \forall\, d' \ge l. 
\end{equation}
Without loss of generality, we assume that $l$ is large enough so that $l > d$. 
We now define $d_1$ as follows: let $d_1$ be such that if a configuration $p\in X_{G}(d)$ satisfies $d_+(p) > d_1$; then,  
there is a nontrivial partition $\sigma$ of $(G,p)$ in $\Sigma(l\; p)$. Note that  from Remark~\ref{rmk:2}, such number $d_1$ exists. We show below that~\eqref{eq:proveforlargeconfiguration} holds for this choice of $d_1$. 

Let $(v_i, v_j) \in E$  be chosen such that $\|x_j - x_i\| = d$. Since $l> d$, there is a sub-framework $(G',p')$ associated with the partition $\sigma$ such that 
 $v_{i}$ and $v_j$ are vertices of  $G'$ (and $x_i$ and $x_j$ are agents in $p'$). 
Let $S'$ be the sub-system induced by $G'$, and $f'(p')$ be the vector field associated with $S'$ at~$p'$. 
Since $p'\in X_{G'}(d)$ and $G'$ is a proper subgraph of $G$, by definition of $\nu(d)$, we have $\|f'(p')\| \ge \nu(d)$. 
Let $V'$ be the vertex set of $G'$; without loss of generality, we assume that $V'= \{v_1,\ldots, v_{k'}\}$, with $k' < k$. 
Define a vector 
$
h := \(h_1,\ldots, h_{k'}\) \in \R^{nk'} 
$, 
with each $h_i\in \R^n$ given by
$$
h_i := \sum_{v_j\in V_i - V'} g_{ij}(d_{ij}) (x_j - x_i). 
$$
By the fact that  $\sigma$ is in $\Sigma(l\; p)$, we have $d_{ij} \ge l$ for $v_i\in V'$ and $v_j\notin V'$. Appealing to~\eqref{eq:verylargedistance}, we obtain 
$$
\|h_i\| \le \sum_{v_j\in V_i - V'} \ol g_{ij}(d_{ij}) < \nu(d) / (2k), 
$$
which implies that $\|h\| < \nu(d) / 2$. 
Recall that $f_{V'}(p)$ is the restriction of $f(p)$ to $p'$. By construction, we have
$
f_{V'}(p) =  f'(p') + h 
$, and hence 
$$
\|f(p)\| \ge \|f_{V'}(p)\| \ge \|f'(p')\| - \|h\| > \nu(d) /2.
$$
We have thus established~\eqref{eq:proveforlargeconfiguration}. This completes the proof. 
\end{proof}

\subsection{Proof of Proposition~\ref{pro:relatepartitiontodynamicalsystem}}
Recall that the subset $X_{G,ij}(d)$ (defined in~\eqref{eq:defXGijd}) is given by  
$$
X_{G,ij}(d) = \left\{ p\in P_G \mid \|x_j - x_i\| = d \right\}. 
$$
Now, let $d'$ and $d''$ be two positive numbers, and let $d(X_{G,ij}(d'),  X_{G,ij}(d''))$ be the distance between $X_{G,ij}(d')$ and $X_{G,ij}(d'')$, which is defined as follows: 
$$
\inf\left\{ \|p' - p''\| \mid p'\in X_{G,ij}(d'), \, p''\in X_{G,ij}(d'') \right\}. 
$$
We have the following fact:

\begin{lem}\label{lem:boundedbelowdistance}
Let $d'$ and $d''$ be two positive numbers. Then, 
$$
d(X_{G,ij}(d'), X_{G,ij}(d'')) = |d' - d''|/ \sqrt{2}. 
$$\,
\end{lem}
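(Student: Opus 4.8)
The claim is that $d(X_{G,ij}(d'), X_{G,ij}(d'')) = |d'-d''|/\sqrt 2$, where $X_{G,ij}(d)$ is the set of configurations with $\|x_j-x_i\|=d$. The plan is to reduce the whole problem to a two-point question in $\R^n$, since only the coordinates $x_i,x_j$ are constrained and all other coordinates can be made to coincide freely. Concretely, given $p'\in X_{G,ij}(d')$ and $p''\in X_{G,ij}(d'')$, write $p' = (x_1',\ldots,x_N')$ and $p''=(x_1'',\ldots,x_N'')$; then $\|p'-p''\|^2 = \sum_k \|x_k'-x_k''\|^2 \ge \|x_i'-x_i''\|^2 + \|x_j'-x_j''\|^2$, with equality attainable by setting $x_k'' := x_k'$ for all $k \neq i,j$ (note this keeps $p''\in P_G$ as long as the perturbed $x_i'',x_j''$ stay distinct from their neighbors, which can be arranged since we only need an infimum). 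So it suffices to show that
$$
\inf\left\{ \|a-a'\|^2 + \|b-b'\|^2 \,\middle|\, \|a-b\| = d', \ \|a'-b'\| = d' \right\}
$$
— wait, with $\|a'-b'\|=d''$ — equals $(d'-d'')^2/2$.

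For the two-point computation, I would first prove the lower bound. Given $a,b$ with $\|a-b\|=d'$ and $a',b'$ with $\|a'-b'\|=d''$, note $d' = \|a-b\| \le \|a-a'\| + \|a'-b'\| + \|b'-b\| = \|a-a'\| + d'' + \|b-b'\|$, hence $\|a-a'\| + \|b-b'\| \ge d'-d''$; by symmetry (swapping the roles) we also get $\|a-a'\|+\|b-b'\| \ge |d'-d''|$. Then by the QM–AM / Cauchy–Schwarz inequality, $\|a-a'\|^2 + \|b-b'\|^2 \ge \tfrac12(\|a-a'\|+\|b-b'\|)^2 \ge |d'-d''|^2/2$. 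This gives $d(X_{G,ij}(d'),X_{G,ij}(d'')) \ge |d'-d''|/\sqrt 2$.

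For the upper bound, I would exhibit an explicit near-optimal pair: fix a unit vector $u\in\R^n$, take $a = -\tfrac{d'}{2}u$, $b = \tfrac{d'}{2}u$, $a' = -\tfrac{d''}{2}u$, $b' = \tfrac{d''}{2}u$; then $\|a-a'\| = \|b-b'\| = |d'-d''|/2$, so $\|a-a'\|^2 + \|b-b'\|^2 = |d'-d''|^2/2$. Lifting this to full configurations (setting all other coordinates equal, and, if necessary, perturbing slightly to remain in $P_G$ and passing to the infimum) shows $d(X_{G,ij}(d'),X_{G,ij}(d'')) \le |d'-d''|/\sqrt 2$. Combining the two bounds gives equality.

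The only mild subtlety — and the step I would be most careful about — is the reduction to the two-point problem: one must check that fixing the non-$\{i,j\}$ coordinates to be equal does not violate the open condition $x_k \neq x_l$ for edges $(v_k,v_l)\in E$, and that the minimizing two-point configurations can be realized (or approached) inside $P_G$. This is handled by a genericity/perturbation argument — start from any $p'\in X_{G,ij}(d')$, translate the block $\{x_k : k\neq i,j\}$ and the pair $x_i,x_j$ as needed, and note that the set of forbidden configurations has measure zero, so an arbitrarily small perturbation keeps us in $P_G$ while changing $\|p'-p''\|$ by an arbitrarily small amount; hence the infimum is unaffected. Everything else is the elementary inequality above.
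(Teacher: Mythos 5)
Your proof is correct and follows essentially the same route as the paper: the same explicit symmetric construction for the upper bound, and a lower bound obtained by discarding all coordinates except $x_i,x_j$ and reducing to a two-point inequality. The only (immaterial) difference is in the final elementary step, where you use the triangle inequality through $a',b'$ plus the bound $u^2+v^2\ge\tfrac12(u+v)^2$, whereas the paper centers the pair at its midpoint and applies the reverse triangle inequality to the relative vectors.
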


\begin{proof}
First, note that there are configurations $p' \in X_{G,ij}(d')$ and $p''\in X_{G,ij}(d'')$ such that
\begin{equation*}\label{eq:equalitysatisfied}
\|p' - p''\| =  |d' - d''|/ \sqrt{2}.
\end{equation*} 
Indeed, let $p' = (x'_1,\ldots, x'_N)$ and $p'' = (x''_1,\ldots, x''_N)$; we then set
$$
\left\{
\begin{array}{l}
x'_i  =  -x'_j   =  (d', 0, \ldots, 0) / 2, \vspace{1pt}\\
x''_i  =  - x''_j =  (d'',0,\ldots,0) / 2.  \vspace{1pt}\\
\end{array}
\right.
$$ 
For the other agents, we set $x'_k  =  x''_k$  for all  $v_k \in V-\{v_i, v_j\}$, 
subject to the constraint that $x'_{a} \neq x'_{b}$ and $x''_{a}\neq x''_{b}$, for all $(v_{a},v_{b})$ in $E$.

We now show that if $p'\in X_{G,ij}(d')$ and $p'' \in X_{G,ij}(d'')$, then 
$$ \|p' - p''\| \ge |d' - d''|/ \sqrt{2}.$$ 
It suffices to show that 
\begin{equation}\label{eq:lem81}
 \|x'_i - x''_i\|^2 + \|x'_j - x''_j\|^2 \ge \frac{1}{2} (d' - d'')^2.
\end{equation}
Let $x':= (x'_i + x'_j) / 2$,  $x'':= (x''_i + x''_j) / 2$, 
and let 
$$
\left\{
\begin{array}{ll}
y'_i := x'_i - x' & y'_j := x'_j - x', \vspace{3pt}\\
y''_i := x''_i - x'' & y''_j := x''_j - x''.  
\end{array}
\right.
$$
First, note that $ y'_i + y'_j = y''_i + y''_j = 0$, and hence
\begin{equation}\label{eq:lem82}
\|x'_i - x''_i\|^2 + \|x'_j - x''_j\|^2 = 2 \|y'_i - y''_i\|^2 + \|x' - x''\|^2.
\end{equation}
We also note that $\|y'_i\| = d'/2 $ and $\|y'_i\| = d''/2$, and hence by the triangle inequality, 
\begin{equation}\label{eq:lem83}
 \|y'_i - y''_i\| \ge | \|y'_i\| - \|y''_i \||  = |d' - d''|/ 2.  
\end{equation}
Combining~\eqref{eq:lem82} and~\eqref{eq:lem83}, we then establish~\eqref{eq:lem81}. 
\end{proof}

To prove Proposition~\ref{pro:relatepartitiontodynamicalsystem}, we further need the following fact:

\begin{lem}\label{lem:upperboundonvelocity}
Let $p(t)$ be a trajectory generated by system~\eqref{MODEL}. Then, the following hold:
\begin{enumerate}
\item $\sup_{t \ge 0} \|f(p(t)\|  < \infty$.
\item For any $\epsilon > 0 $, there exists an instant $T_{\epsilon}$ such that 
\begin{equation}\label{eq:defTepsilon}
\Psi\(p(T_{\epsilon})\) - \Psi\(p(\infty)\) \le \epsilon.
\end{equation}
\end{enumerate}\,
\end{lem}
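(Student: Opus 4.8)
The plan is to prove the two assertions of Lemma~\ref{lem:upperboundonvelocity} separately, both relying on the gradient structure of system~\eqref{MODEL} together with the preliminary bounds already established.

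For part~(1), I would first invoke Lemma~\ref{ELB}: since $p(0)\in P_G$ is fixed, there is a uniform lower bound $d_0 := \inf\{d_-(p(t))\mid t\ge 0\} > 0$ on the minimum inter-neighbor distance along the trajectory. Thus $p(t)\in Q_G(d_0)$ for all $t\ge 0$, where $Q_G(d_0)$ is the set defined in~\eqref{eq:defQd}. Next I would use the fact that $\Psi$ is non-increasing along $p(t)$ and bounded below (Lemma~\ref{lem:phiboundedbelow}), so $\Psi(p(t))\le \Psi(p(0))$ for all $t$. I claim this, combined with the strong-repulsion bound argument used in the proof of Lemma~\ref{ELB}, also yields a uniform \emph{upper} bound on each edge length: by the same ``one edge long forces $\Psi$ large'' reasoning (using $\int_1^d sg_{ij}(s)\,ds\to\infty$ as $d\to\infty$, which follows from fading attraction since $g_{ij}(d)>0$ for $d\ge\alpha_+$), there is a $d_1<\infty$ with $d_+(p(t))\le d_1$ for all $t$. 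Hence every $d_{ij}(t)$ lies in the compact interval $[d_0,d_1]$, so each $|g_{ij}(d_{ij}(t))|$ is bounded by a constant $M_{ij}:=\max_{d\in[d_0,d_1]}|g_{ij}(d)|$, and each $\|x_j(t)-x_i(t)\| = d_{ij}(t)\le d_1$. Plugging these into $f_i(p(t)) = \sum_{v_j\in V_i}g_{ij}(d_{ij})(x_j-x_i)$ gives $\|f_i(p(t))\|\le \sum_{v_j\in V_i}M_{ij}d_1$, and summing over $v_i\in V$ bounds $\|f(p(t))\|$ uniformly in $t$. This proves part~(1).

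For part~(2), I would use that along the gradient flow, $\frac{d}{dt}\Psi(p(t)) = -\|f(p(t))\|^2 \le 0$. Since $\Psi(p(t))$ is non-increasing and bounded below (Lemma~\ref{lem:phiboundedbelow}), the limit $\Psi(p(\infty)):=\lim_{t\to\infty}\Psi(p(t))$ exists and is finite. The monotone convergence of a real-valued function to its infimum immediately gives: for any $\epsilon>0$ there is $T_\epsilon$ with $\Psi(p(T_\epsilon)) - \Psi(p(\infty))\le\epsilon$, which is precisely~\eqref{eq:defTepsilon}. (One could even take $T_\epsilon$ so that the inequality holds for all $t\ge T_\epsilon$, but the statement only requires a single instant.)

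The only genuinely nontrivial point is the uniform upper bound $d_+(p(t))\le d_1$ needed in part~(1); everything else is routine. I would establish it by the argument already used in Lemma~\ref{ELB}: fading attraction guarantees $\int_1^d s g_{ij}(s)\,ds \to +\infty$ as $d\to\infty$ for each edge (the integrand is eventually positive and its contribution grows, or at worst one argues via $sg_{ij}(s)$ being bounded below and positive on $[\alpha_+,\infty)$ so the integral is at least linear in $d$ for large $d$), hence there is $d_1$ with $\int_1^{d_1} s g_{ij}(s)\,ds + (|E|-1)\psi_0 > \Psi(p(0))$ for every edge; if some $d_{ij}(t) = d_1$ then $\Psi(p(t))$ would exceed $\Psi(p(0))$, contradicting monotonicity. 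I would note in passing that this same upper-bound fact is essentially the conclusion of the main Theorem's part~(2) applied to the trajectory, but here a direct self-contained argument via $\Psi$-monotonicity suffices and avoids circularity, provided the integral growth claim holds --- which it does under Definition~\ref{def:fadingattraction} since $g_{ij}$ is continuous and strictly positive on $[\alpha_+,\infty)$, making $\int_{\alpha_+}^{d} sg_{ij}(s)\,ds$ unbounded as $d\to\infty$.
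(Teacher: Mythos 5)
Part~(2) of your proposal is fine and matches the paper. Part~(1), however, contains a genuine error, and it is precisely the error that this paper is built around. You claim that fading attraction guarantees $\int_1^d s g_{ij}(s)\,ds \to +\infty$ as $d\to\infty$, ``since $g_{ij}$ is continuous and strictly positive on $[\alpha_+,\infty)$.'' That inference is false: a continuous, strictly positive function can have a finite improper integral. Fading attraction says $\lim_{d\to\infty} d g_{ij}(d) = 0$, so the integrand $s g_{ij}(s)$ tends to zero and nothing forces the integral to diverge. The paper gives an explicit counterexample in Section~II: for Lennard-Jones interactions with $n_{ij,1} > n_{ij,2} > 2$ one has $\int_1^\infty s g_{ij}(s)\,ds < \infty$, and $\Psi(p)$ can remain bounded as $d_+(p)\to\infty$. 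Consequently your ``one edge long forces $\Psi$ large'' argument does not produce an upper bound $d_1$ on $d_+(p(t))$; indeed, a uniform bound on $d_+(p(t))$ along trajectories is essentially the content of the main theorem and requires the entire dilute-partition machinery (Propositions~\ref{pro:selfclustering} and~\ref{pro:relatepartitiontodynamicalsystem}). Your intermediate remark that $s g_{ij}(s)$ is ``bounded below and positive on $[\alpha_+,\infty)$ so the integral is at least linear'' is wrong for the same reason: it is bounded, but its infimum over $[\alpha_+,\infty)$ is $0$.

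The fix is that no upper bound on $d_{ij}(t)$ is needed at all, and this is how the paper argues. The magnitude of the force that $x_j$ exerts on $x_i$ is $|d_{ij} g_{ij}(d_{ij})| = |\overline{g}_{ij}(d_{ij})|$, i.e., the \emph{product} you were trying to bound by splitting it into $|g_{ij}(d_{ij})|$ times $\|x_j - x_i\|$. Bound the product directly: by Lemma~\ref{ELB} there is $d_* > 0$ with $d_{ij}(t) \ge d_*$ for all $t$, and on $[d_*,\infty)$ the continuous function $d \mapsto d g_{ij}(d)$ is bounded because it tends to $0$ at infinity (fading attraction) and is continuous on the compact part. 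Hence
$$
\sup\{\, |d_{ij}(t)\, g_{ij}(d_{ij}(t))| \mid t \ge 0 \,\} \le \sup\{\, |d\, g_{ij}(d)| \mid d \ge d_* \,\} < \infty,
$$
and summing over edges bounds $\|f(p(t))\|$ uniformly. This is both simpler than your route and avoids the unprovable (at this stage) claim about $d_+(p(t))$.
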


\begin{proof} We first prove part 1. 
Let $d_{ij}(t):= \|x_j(t) -x_i(t)\|$. It suffices to show that for all $(v_i,v_j) \in E$,  
\begin{equation}\label{eq:Novsecond8:45am}
\sup \{d_{ij}(t)g_{ij}(d_{ij}(t)) \mid t \ge 0   \} < \infty.
\end{equation}
From Lemma~\ref{ELB}, there is a number $d_*> 0$ such that $d_-(p(t)) \ge d_*$ for all $t \ge 0$. 
Then, from the condition of {\it fading attraction}, we have
$$
\sup\{dg_{ij}(d) \mid d \ge d_*\} < \infty,
$$
which implies that~\eqref{eq:Novsecond8:45am} holds. 
We now prove part 2. From Lemma~\ref{lem:phiboundedbelow}, the potential function $\Psi$ is bounded below. On the other hand, $\Psi(p(t))$ is non-increasing. Hence,   
the limit
$
\Psi(q(\infty)) := \lim_{t \to \infty}\Psi(q(t)) 
$   
exists, which then implies the existence of $T_{\epsilon}$ such that~\eqref{eq:defTepsilon} holds.   
\end{proof}

With Lemmas~\ref{lem:boundedbelowdistance} and~\ref{lem:upperboundonvelocity}, we prove Proposition~\ref{pro:relatepartitiontodynamicalsystem}.

\begin{proof}[Proof of Proposition~\ref{pro:relatepartitiontodynamicalsystem}]
Fix an~$i\in \N$; we prove Proposition~\ref{pro:relatepartitiontodynamicalsystem} by first exhibiting a~$j'_i\in \N$ such that 
\begin{equation}\label{eq:firstproof}
\cal{L}_-(\sigma_t) < L_0, \hspace{10pt} \forall\,  t\ge t_{j'_i}, 
\end{equation}
and then, exhibiting a~$j''_i\in \N$ such that 
\begin{equation}\label{eq:secondproof}
\cal{L}_+(\sigma_t) > \max\{l_i, L_0\},  \hspace{10pt} \forall\, t \ge t_{j''_i}.  
\end{equation}
Note that if such indices~$j'_i$ and~$j''_i$ exist, then the proof is complete; indeed, let $j_i := \max\{j'_i, j''_i\}$, then $\sigma_t$ is in $\Sigma(l_i\; q(t))$ for all $t \ge t_{j_i}$. 
We now establish~\eqref{eq:firstproof} and~\eqref{eq:secondproof}, respectively.

\vspace{3pt}
\noindent
{\it 1). Proof of existence of~$j'_i$}. We first make some definitions. 
By assumption, we have $L_0 > D_+$, 
and hence from Proposition~\ref{pro:defnud}, $\mu(L_0) > 0$. Since $\mu$ is continuous,  there exists a $\delta_0 > 0$ such that if we let $I_0: = [L_0 - \delta_0, L_0+ \delta_0]$, then    
$\mu(d) \ge \mu(L_0)/2$ for all  $d \in I_0$.  
Let $\xi:= \sup_{t \ge 0} \| f(q(t)) \|$, which is a positive real number by Lemma~\ref{lem:upperboundonvelocity}, and let 
$
\tau_0 := {\delta_0} / (\sqrt{2} \, \xi )
$.

Let 
$
X_{G,ij}(I_0) := \cup_{d\in I_0} \, X_{G,ij}(d)
$. 
We show below that if, at certain instant $t_0\ge 0$, we have  $q(t_0)\in X_{G,ij}(L_0)$ for some $(v_i,v_j) \in E$; then,   
\begin{equation}\label{eq:evaluatept'}
q(t) \in X_{G,ij}(I_0), \hspace{10pt} \forall\, t \in  [t_0, t_0 +  \tau_0 ]. 
\end{equation}
First, note that if the trajectory $q(t)$ leaves $X_{G,ij}(I_0)$ at $t' > t_0$, then it {\it has to} intersect either $ X_{G,ij}(L_0 + \delta_0)$ or  $ X_{G,ij}(L_0 - \delta_0)$. 
On the other hand, from Lemma~\ref{lem:boundedbelowdistance}, 
if $p', p''\in P_G$ are such that $p'\in X_{G,ij}(L_0)$ and 
$
p''\in X_{G,ij}(L_0 \pm\delta_0)
$, 
then, 
$
\|p' - p''\| \ge \delta_0/ \sqrt{2}
$.  Furthermore, we have
$\|\dot q(t)\| \le \xi$ for all $t \ge 0$. Hence, starting from $q(t_0)\in X_{G,ij}(L_0)$, the trajectory $q(t)$ has to remain within $X_{G,ij}(I_0)$ for at least $\tau_0$ units of time. We have thus established~\eqref{eq:evaluatept'}. 

On the other hand, we have
$$
\Psi(q(t_0)) - \Psi(q(t_0 + \tau_0))   =  \int^{t_0+ \tau_0}_{t_0}\| f(q(t)) \|^2 \, dt. 
$$
Combining~\eqref{eq:evaluatept'} with the fact that $f(p) \ge \mu(L_0) /2 $ for all $p\in X_{G,ij}(I_0)$, we obtain 
\begin{equation*}\label{eq:epsilon0decrease}
\Psi(q(t_0)) - \Psi(q(t_0 + \tau_0))   \ge \epsilon_0, \hspace{5pt}\mbox{for} \hspace{5pt} \epsilon_0 :=  \frac{\mu(L_0)^2\tau_0}{4}.
\end{equation*}
From Lemma~\ref{lem:upperboundonvelocity}, there is an instant $T_{\epsilon_0}$ such that 
\begin{equation*}\label{eq:deltaisepsilon0}
\Psi(q(T_{\epsilon_0})) - \Psi(q(\infty)) = \epsilon_0. 
\end{equation*}
Since the sequence $\{t_i\}_{i\in \N}$  monotonically increases, and approaches to infinity, there is a~$j'_i\in \N$ such that 
$
t_{j'_i} > T_{\epsilon_0}  
$. 
We now show that~\eqref{eq:firstproof} holds for the choice of~$j'_i$. The proof is carried out by contradiction. Suppose that, to the contrary, there is an instant $t_0 \ge t_{j'_i}$ such that
$
L_{-}(\sigma_{t_0}) = L_0
$. Then, $q(t_0) \in X_{G}(L_0)$, and hence by the arguments above, we have $\Psi(q(t_0 + \tau_0))\le \Psi(q(t_0)) - \epsilon_0$. Moreover, since $t_0 \ge t_{j'_i} > T_{\epsilon_0}$, and by the fact that $\Phi(q(t))$ strictly monotonically decreases in~$t$, we have
$ \Psi(q(t_0)) < \Psi(q(T_{\epsilon_0}))$. Combining these facts, we obtain 
$$
\Psi(q(t + \tau_0)) < \Psi(q(T_{\epsilon_0})) - \epsilon = \Psi(q(\infty)),
$$
which is a contradiction. We have thus shown that~\eqref{eq:firstproof} holds for the choice of~$j'_i$.

\vspace{3pt}
\noindent
{\it 2). Proof of existence of~$j''_i$}. The proof here is similar to the proof of existence of~$j'_i$.  Let
$L_1:=\max\{l_i, L_0\}$;  
from Proposition~\ref{pro:defnud}, there is a closed interval 
$I_1:= [L_1 - \delta_1, L_1+ \delta_1]$, for some $\delta_1 > 0$,  
such that
$\mu(d) \ge \mu(L_1)/2$ for all $d \in I_1$.  Let $\tau_1 := {\delta_1} / (\sqrt{2} \, \xi )$. Suppose that at certain instant $t_1$, $q(t_1)\in X_{G,ij}(L_1)$ for some $(v_i,v_j)\in E$,  then from Lemma~\ref{lem:boundedbelowdistance}, 
$
q(t) \in  X_{G,ij}(I_1) 
$  for all  $t \in  [t_1, t_1 +  \tau_1 ]$. 
It then follows that 
$$
\Psi(q(t_1)) - \Psi(q(t_1 + \tau_1))   \ge \epsilon_1, \hspace{5pt} \mbox{for} \hspace{5pt} \epsilon_1 := \frac{\mu(L_1)^2\tau_1}{4}.
$$
Appealing again to Lemma~\ref{lem:upperboundonvelocity}, we obtain an instant $T_{\epsilon_1}$ such that 
$
\Psi(q(T_{\epsilon_1})) - \Psi(q(\infty)) = \epsilon_1 
$. 
Since both sequences $\{t_i\}_{i\in \N}$ and $\{l_i\}_{i\in \N}$ monotonically increase, and approach to infinity, there is a~$j''_i\in \N$ such that 
$
t_{j''_i} > T_{\epsilon_1}$ and  $l_{j''_i} > L_1$.  Then,~\eqref{eq:secondproof} holds for the choice of~$j''_i$ because otherwise, there will be an instant~$t_1$, with $t_1 \ge t_{j''_i}$, such that $L_+(\sigma_{t_1}) = L_1$, and hence
$\Psi(q(t_1 + \tau_1)) < \Psi(q(\infty))$, which is a contradiction. 
 This completes the proof.   
\end{proof}

\end{document}